\newtheorem{theorem}{Theorem}
\newtheorem{lemma}[theorem]{Lemma}
\newtheorem{corollary}[theorem]{Corollary}
\newtheorem{proposition}[theorem]{Proposition}
\theoremstyle{definition}
\newtheorem{definition}[theorem]{Definition}
\newcommand{\problemdef}[3]{
	\begin{center}\fbox{
	\begin{minipage}{0.95\textwidth}
		\noindent
		\underline{#1}
		\vspace{5pt}\\
		\setlength{\tabcolsep}{3pt}
		\begin{tabularx}{\textwidth}{@{}lX@{}}
			\textbf{Input:}     & #2 \\
			\textbf{Question:}  & #3
		\end{tabularx}
	\end{minipage}}
	\end{center}
}
\newcommand{\probnamelong}{\textsc{Constructive Bribery for Challenge the Champ Tournaments}\xspace}
\newcommand{\probnamecup}{\textsc{Constructive Bribery for Cup Tournaments}\xspace}
\newcommand{\probname}{CBCCT\xspace}
\newcommand{\PK}{\textsc{Product Knapsack}\xspace}
\newcommand{\MPK}{\textsc{Multicolored Product Knapsack}\xspace}
\title{Parameterized Analysis of Bribery in Challenge~the~Champ~Tournaments}
\author{Juhi~Chaudhary\thanks{Supported by the ERC, grant number 101039913.}}
\author{Hendrik~Molter\thanks{Supported by the ISF, grant number~1456/18, and the ERC, grant number 949707.}}
\author{Meirav~Zehavi$^*$}
\affil{\small Department of Computer Science, Ben-Gurion~University~of~the~Negev, 
Beer-Sheva, 
Israel\\ \texttt{juhic@post.bgu.ac.il, molterh@post.bgu.ac.il, meiravze@bgu.ac.il}}
\date{}
\begin{document}

\maketitle

\begin{abstract} 
\emph{Challenge the champ tournaments} are one of the simplest forms of competition, where a (initially selected) champ is repeatedly challenged by other players. If a player beats the champ, then that player is considered the new (current) champ. Each player in the competition challenges the current champ once in a fixed order. The champ of the last round is considered the winner of the tournament. We investigate a setting where players can be bribed to lower their winning probability against the initial champ. The goal is to maximize the probability of the initial champ winning the tournament by bribing the other players, while not exceeding a given budget for the bribes. Mattei et al.~[Journal of Applied Logic, 2015] showed that the problem can be solved in pseudo-polynomial time, and that it is in XP when parameterized by the number of players.

We show that the problem is weakly NP-hard and W[1]-hard when parameterized by the number of players. On the algorithmic side, we show that the problem is fixed-parameter tractable when parameterized either by the number of different bribe values or the number of different probability values. To this end, we establish several results that are of independent interest. In particular, we show that the product knapsack problem is W[1]-hard when parameterized by the number of items in the knapsack, and that constructive bribery for cup tournaments is W[1]-hard when parameterized by the number of players. Furthermore, we present a novel way of designing mixed integer linear programs, ensuring optimal solutions where \emph{all} variables are integers.

\bigskip

\noindent\textbf{Keywords:} Challenge the Champ Tournament, Tournament Manipulation, Constructive Bribery, Product Knapsack, Mixed Integer Linear Programming, Parameterized Complexity.
\end{abstract}
\section{Introduction}

Sports tournaments are ubiquitous at global events such as World Cups and the Olympics, national events such as sports leagues, and local events such as school competitions. 
While entertaining, these sports tournaments aim to impartially identify the most talented player, the {\em champ}, according to specific criteria. Unfortunately, the crucial requirement of ensuring fairness in this process is a highly complicated challenge. On top of the fact that every player aspires to become the champ, the ongoing monetization of sports---through advertising and lucrative brand deals awarded to winners--- intensifies the competition. Accordingly, the performance of various forms of manipulation in tournaments, such as bribery, constitutes a significant body of research in social choice theory and related disciplines.
These works concern, in particular, round-robin tournaments (see, e.g., \cite{baumeister2021complexity,krumer2023strategic,rasmussen2008round}), cup tournaments (see, e.g., \cite{suksompong2021tournaments,williams_moulin_2016,gupta2018winning,russell2009manipulating,vu2009complexity}), and challenge the champ tournaments~\cite{mattei2015complexity}.

The literature focuses on several prominent ways to manipulate a tournament. The (arguably) most natural one is to offer incentives such as bribes to specific players (individuals or part of a team), team coaches, or judges, persuading them to lose (or, in the case of judges, flip the outcome) of a match deliberately~\cite{russell2009manipulating}.  
We focus on the standard concept of budget-constrained bribery in tournaments~\cite{gupta2018winning,russell2009manipulating,vu2009complexity}, and on challenge the champ tournaments (as well as, to some extent, cup tournaments).

\paragraph{Our Setting.} We study the computational problem of constructive (budget-constrained) bribery in challenge the champ tournaments in the (standard) probabilistic setting, termed \probnamelong (\probname). The study of the complexity of this problem was initiated by \citet{mattei2015complexity}. Challenge the champ tournaments consist of a set of $n+1$ players, $\{e_{1},\ldots,e_{n},e^{*}\}$, where $e^{*}$ is the initial champ. The (initially selected) champ~$e^*$ is repeatedly challenged by the other players. If a player beats the champ, then that player is considered the new (current) champ. Each player in the competition challenges the current champ once in the fixed order $e_1,e_2,\ldots,e_n$. The champ of the last round is considered the winner of the tournament. 
When we consider the possibility of manipulation in tournaments, we are supposed to possess information about the probabilities of the outcomes of the matchings. Here, the standard probabilistic model is to assume that for each pair of players that can potentially compete against each other, we know the probability of one of them beating the other (and, hence, we also know the probability of the other beating the first); see, e.g., \cite{mattei2015complexity,kim2015fixing,vu2008agenda,aziz2014fixing,stanton2011manipulating}. Constructive bribery is the most ubiquitous form of manipulation in both competition and voting~\cite{mattei2015complexity,saarinen2015probabilistic,faliszewski2006complexity,faliszewski2009hard,karia2023hard,tao2023electoral}, and its objective is to manipulate the selection process so that our favorite player/candidate wins. Here, we are often supposed to have a price associated with each possible bribing action along with a budget.

Accordingly, in \probname  we are given, along with player set $\{e_{1},\ldots,e_{n},e^{*}\}$:
\begin{itemize}
\item For every player $e_i$, a {\em bribe vector}, which is a vector of price-probability pairs; each pair specifies the price of the bribe(s) required to make $e_i$ lose against $e^{*}$ with the specified probability. We can suppose that the vector includes a pair with price $0$, which corresponds to the probability of $e_i$ losing when no bribe is involved.
\item A budget $B\in \mathbb{N}$.
\item A threshold probability $t\in [0,1]$.
\end{itemize}

The rationale behind having a vector with more than two entries (and, in particular, losing probabilities other than $1$ when a bribe is involved) is that various ways can affect the probability of a team or player losing, each having a different price. For example, we can bribe a different number of players in $e_i$ (when $e_i$ is a team), different coaches of $e_i$, the judge(s) of that specific match, alter various environmental conditions (e.g., which player plays in which court), and more.

The goal of \probname  (formally defined in \cref{sec:prelim}) is to determine whether the probability of~$e^{*}$ winning the tournament can be increased to or above $t$ using bribes for the matches between~$e^*$ and $e_{1},\ldots,e_{n}$ based on their respective bribe vectors, without exceeding the budget $B$. We remark that our model is slightly more general than the one of \citet{mattei2015complexity}, since they require the probabilities to be encoded in unary and we do not.


The initial work of \citet{mattei2015complexity} proved the following results related to \probname:
\begin{itemize}
    \item  \probname belongs to NP. This follows from~\cite[Corollary~4.4]{mattei2015complexity}.
   
    \item \probname is in XP\footnote{Standard terminology in parameterized complexity is defined in \cref{sec:prelim}.} when parameterized by the number of players. This result implicitly follows from \cite[Theorem~4.9 \& Corollary~4.10]{mattei2015complexity}, since the number of rounds of the tournament and the number of games in every round are upper-bounded by the number $n$ of players.
    \item \probname can be solved in $O(B^2n)$ time~\cite[Theorem~4.13]{mattei2015complexity}, 
showing that \probname is solvable in  pseudo-polynomial time. 

    \item They established (weak) NP-hardness of a variant of \probname, where all probabilities are expressed as (negative) powers of two \cite[Theorem~4.17]{mattei2015complexity}. Note that their reduction requires a compact representation of the probabilities. Hence, the problem they addressed is not a special case of \probname, and their reduction does not imply (weak) NP-hardness of \probname.

\end{itemize}

Cup tournaments are extremely popular in sports competitions~\cite{chaudhary2023make,suksompong2021tournaments,williams_moulin_2016,vu2008agenda,gupta2018winning}, voting~\cite{vu2009complexity,laslier1997tournament}, and decision making~\cite{brandt2007pagerank,rosen1985prizes}. 
Roughly speaking, a cup tournament is conducted in $\log_2 n$ rounds: in each round, the remaining players are paired up into matches, and the losers are knocked out of the tournament; when a single player remains, it is declared the winner.  (A formal definition is given in  \cref{sec:prelim}.)
Concerning \probnamecup, Mattei et al.~\cite{mattei2015complexity} established its classification within NP. Additionally, for the deterministic setting, they showed that \probnamecup can be efficiently solved in polynomial time using a dynamic programming algorithm. Furthermore, they introduced a variant of \probnamecup, termed \textsc{Exact Bribery}, where the goal is to precisely spend a budget of $B$, keeping other things the same. This variant was proven to be NP-complete.

\paragraph{Our Contribution.} We start with establishing the (classical) computational complexity of \probname. In \cref{sec:hardness} we show the following.
\begin{itemize}
    \item \probname is weakly NP-hard.
\end{itemize}
This motivates developing parameterized algorithms~\cite{downey2013fundamentals,CyganFKLMPPS15,niedermeier2006invitation} for the problem. We consider three parameters: the number of players, the number of distinct bribe values, and the number of distinct probability values. 

\smallskip\noindent
{\em Number of players.} Tournaments often involve relatively few players. For example, usually, Tennis tournaments involve around 128 players, and boxing championships involve around 30 players in a weight category. Hence, the number of players is a highly practical parameter. However, in \cref{sec:hardness} we show the following
\begin{itemize}
    \item \probname is W[1]-hard when parameterized by the number $n$ of players.
\end{itemize}
This implies that the XP-algorithm by \citet{mattei2015complexity} presumably cannot be improved to an FPT-algorithm. To prove the result, we also show that the \PK problem is W[1]-hard when parameterized by the number of items in the knapsack. We believe this is a valuable result in its own right: \PK can be a useful source problem for reductions to additional problems in social choice that concern probabilities (and, hence, a product of numbers) as well. Moreover, we show that our result for \probname further implies that {\sc Constructive Bribery for Cup Tournaments} is W[1]-hard, too, when parameterized by the number of players. We believe that our aforementioned implication nicely extends the results of the literature on cup tournaments, which is abundant with studies of various forms of manipulation (including bribery) from the perspective of parameterized complexity~\cite{gupta2018winning,zehavi2023tournament,konicki2019topics,ramanujan2017rigging,aziz2014fixing}.

\smallskip

\noindent\emph{Number of distinct bribe and probability values.} The number of distinct bribes is a well-motivated parameter: often, prices of the same action do not have that many possibilities. For example, a certain judge or coach will ask for the same price (or have a small range of prices) irrespective of the player or team involved. Furthermore, it is conceivable that if, say, the budget is thousands of dollars, then bribes that are not multiplications of one thousand will not be discussed---this, too, reduces the number of distinct bribes possible. Moreover, the number of distinct probabilities is likely to be small as well. Probabilities are, essentially,  rough estimations, and hence, even if we have a wide range of them, they can be rounded up to the closest value from a (predetermined) small-sized set of probabilities. In \cref{sec:algorithmic}, we show that these parameters yield tractability. 
\begin{itemize}
    \item \probname in FPT when parameterized by the number of distinct bribe values.
    \item \probname in FPT when parameterized by the number of distinct probability values.
\end{itemize}
Both algorithms exhibit similarities and are derived through mixed integer linear program (MILP) formulations for \probname. 
To obtain the results, we develop a novel method of designing MILPs that are guaranteed to have optimal solutions where \emph{all} variables are set to integer values.
We believe this technique can be useful in various application areas and hence is of independent interest.

\paragraph{Application to Campaign Management.} A deeper look into the definition of the \probname\ problem shows that the order of the players $e_1,e_2,\ldots,e_n$ is irrelevant to its answer---i.e., if we reorder them, we obtain an equivalent instance in terms of whether the answer to our specific objective is yes or no. (Of course, reordering might affect $e_1,e_2,\ldots,e_n$, but not $e^*$, who has to play and win against all of them.) Thus, we can, essentially, suppose that $e_1,e_2,\ldots,e_n$ are unordered. This gives rise to other applications of our results, e.g., to the area of {\em campaign management}~\cite{bredereck2016large,elkind2009swap,elkind2010approximation}. Specifically, we can think of $e^*$ as a candidate (person, idea, or product) that aims to win the election/be approved, and of each $e_i$ as a voter (possibly representing a group of individuals who cast a single vote), whose support/consent is essential to $e^*$. Then, for each $e_i$, a price-probability pair represents an amount of money to invest in winning $e_i$'s support/consent (e.g., by advertising) and the estimated probability of that amount being enough.

\paragraph{Other Related Works.} 
In the standard \textsc{Tournament
Fixing Problem} (TFP) with input $(e^{*},P,t)$, where $t\in[0,1]$, $e^{*}\in [n]$
is a ``favorite” player among the set of players $[n]$, and $P$ is an $n\times n$ matrix where the entry $P_{i,j}$ gives the probability that player $i$ beats player $j$, the question is whether a self-interested organizer can select a seeding for which $e^{*}$
wins the balanced cup tournament with probability at least $t$ \cite{vu2009complexity}.


 The restricted version of TFP, where $P$ has entries from $\{0,1\}$ only (the deterministic case), has also been studied and shown to be NP-hard by \citet{aziz2014fixing}. Later, a substantial body of works investigated the parameterized complexity of deterministic TFP (see, e.g., \cite{zehavi2023tournament,ramanujan2017rigging,gupta2018winning,gupta2019succinct,roy2020select}).
 \citet{konicki2019bribery} explored a variation of TFP, allowing organizers to both arrange seeding and bribe players to decrease their probability of winning against others at a specified cost, provided it stays within a budget. In their model, the probability matrix $P$ is either deterministic or $\varepsilon$-monotonic, reflecting player ordering and constraints on winning probabilities.

A paper by \citet{saarinen2015probabilistic} proved that it is $\#$P-hard to manipulate
a round-robin tournament by controlling the outcome of a subset of the games to raise the probability of winning above a particular threshold. The result holds in the restricted case where all
probabilities are zero, one half, or one. 


\section{Problem Setting and Preliminaries} \label{sec:prelim}


In the setting of challenge the champ tournaments, we have $n+1$ players, say, $\{e_1,\ldots, e_n, e^*\}$. Player~$e^*$ is initially the champ. In each of the $n$ rounds, player $e_i$ challenges the current champ and is considered the new champ if they win the challenge. 
We formally define a challenge the champ tournament as follows. A tournament tree for a challenge the champ tournament is visualized in \cref{fig1}.

\begin{definition} [Challenge the Champ Tournament]
A \emph{challenge the champ tournament} consists of a set of $n+1$ players $\{e_1,\ldots, e_n, e^*\}$ and has $n$ rounds. Initially, player $e^*$ is considered the champ (of round 0). In round $i>0$, player $e_i$ challenges the champ of round $i-1$, say player $e$. If~$e_i$ beats $e$, then $e_i$ becomes the champ of round $i$. Otherwise, $e$ is the champ of round $i$. 
The champ of round $n$ is considered the \emph{winner} of the tournament.
\end{definition}

  \begin{figure}[t]
 \centering
    \includegraphics[scale=.8]{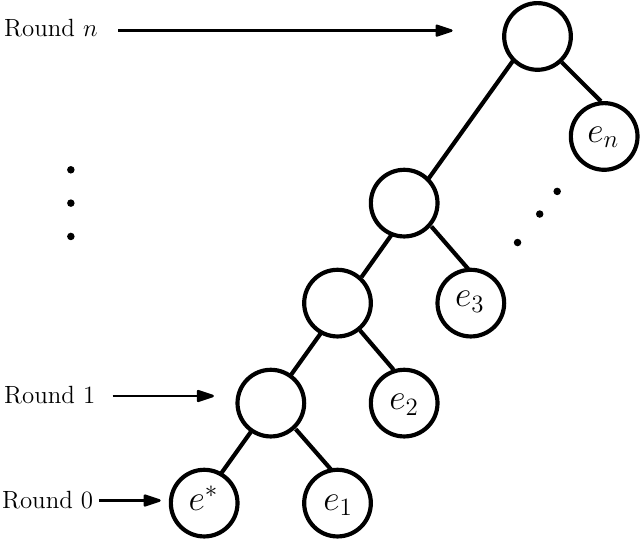}
    \caption{Illustration of a challenge the champ tournament with players $\{e_{1},\ldots,e_{n},e^{*}\}$. Here, $e^{*}$ is the initial champ.} 
    \label{fig1}
\end{figure}

\paragraph{Constructive Bribery.}
In this paper, we investigate the setting where players from the set $\{e_1,\ldots, e_n\}$ can be influenced through bribes to reduce their chances of winning against $e^*.$ Our objective is to determine if a specific budget for bribes can be allocated in such a way that player $e^*$ maintains the champ title and wins the tournament with a predefined probability, called \emph{threshold value} $t$, after facing each player in $\{e_1,\ldots, e_n\}$. Note that since we are only interested in cases where $e^*$ wins all games, the round in which each challenger plays against the champ does not matter. 

To formalize the problem, we introduce a so-called \emph{bribe vector} for each player in $\{e_1,\ldots, e_n\}$. We denote by $C_i$ the bribe vector for player $e_i$. Intuitively, $C_i$ specifies how much it costs to bribe players into lowering their winning probability against $e^*$.

\begin{definition}[Bribe Vector]
Let $e_i\in \{e_1,\ldots, e_n\}$ be a player and let $\ell_i\in \mathbb{N}$ be the number of different bribes that $e_i$ accepts. Then, the \emph{bribe vector} $C_i\in(\mathbb{N}\times [0,1])^{\ell_i}$ is a vector of length~$\ell_i$ with elements from $\mathbb{N}\times [0,1]$. Each element $C_i[j]=(b_j,p_j)\in \mathbb{N}\times [0,1]$ implies that bribing player~$e_i$ with amount $b_j$ increases their losing probability when playing against $e^*$ to $p_j$. We call~$b_j$ a \emph{bribe value} and $p_j$ a \emph{probability value}. Furthermore, we require for all $i$ that $C_i[1]=(0,p_1)$, that is, the first entry of each bribe vector contains the losing probability when playing against $e^*$ when no bribes are used. Moreover, we require that if $j<j'$ then $b_j<b_{j'}$, where $C_i[j]=(b_j,p_j)$ and~$C_i[j']=(b_{j'},p_{j'})$. 
\end{definition}

Now, given a set of bribes $\{j_1,\ldots,j_n\}$ for the players $\{e_1,\ldots,e_n\}$ (where $j_i=1$ if player $e_i$ is not bribed), the total cost of the bribes is $\sum_i b_{j_i}$, and the winning probability of $e^*$ is $\prod_i p_{j_i}$, where $(b_{j_i},p_{j_i})=C_i[j_i]$.
The main problem that we study in this paper is formally defined as follows.


\problemdef{\probnamelong (\probname)}
{A set of players $P=\{e_1,\ldots, e_n, e^*\}$, a bribe vector $C_i\in(\mathbb{N}\times [0,1])^{\ell_i}$ for each player $e_i\in P$, a probability threshold~$t\in[0,1]$, and a budget $B\in \mathbb{N}$.} 
{Can we raise $e^{*}$'s probability of winning the challenge the champ tournament to at least $t$ by bribing the players $\{e_1,\ldots,e_n\}$ according to their bribe vectors and not exceeding the budget~$B$?}

Furthermore, we make the following observation about bribe vectors. We call a bribe vector $C$ \emph{monotone}, if for all $j<j'$ we have that $p_j<p_{j'}$, where $C[j]=(b_j,p_j)$ and $C[j']=(b_{j'},p_{j'})$.
    \begin{lemma}\label{lemma:values}
     Given an instance of \probname with bribe vectors ${C_i}$ with $i\in\{1,\ldots,n\}$, we can compute an equivalent instance in polynomial time with monotone bribe vectors.
    \end{lemma}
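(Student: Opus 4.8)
The plan is to transform each bribe vector independently by deleting its \emph{dominated} entries, keeping only the Pareto-optimal ones. Fix a player $e_i$; by definition its entries already satisfy $0=b_1<b_2<\cdots<b_{\ell_i}$. I would call an entry $C_i[j]=(b_j,p_j)$ dominated if some entry $C_i[j']=(b_{j'},p_{j'})$ has $b_{j'}\le b_j$ and $p_{j'}\ge p_j$: paying for $j'$ never costs more yet makes $e_i$ lose to $e^*$ with probability at least as high, so $j$ is never needed. Since the $b_j$ are pairwise distinct, an entry is dominated precisely when some strictly cheaper entry has a probability value that is at least as large.

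Concretely, I would scan each $C_i$ in its given order (increasing bribe value), maintain the largest probability value $p_{\max}$ among the entries retained so far, retain $C_i[j]$ exactly when $p_j>p_{\max}$ (then set $p_{\max}\leftarrow p_j$), and discard it otherwise. This is a single pass per vector, so the whole construction runs in $O(\sum_i \ell_i)$ time, linear in the input. The no-bribe entry $C_i[1]=(0,p_1)$ is the cheapest and hence never dominated, so it is always retained and the required invariant $C_i[1]=(0,p_1)$ survives. The retained entries form a subsequence, so their bribe values remain strictly increasing; and since an entry is retained only when its probability strictly exceeds that of every retained cheaper entry, the retained probability values are strictly increasing as well. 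Hence every new bribe vector is monotone.

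The remaining, and really the only, thing to check is that the new instance is equivalent to the old one, and this is where I would spend the most care. One direction is immediate: a retained entry is an original entry, so any bribe selection valid for the new instance (meeting the threshold within budget) is verbatim valid for the original. For the converse, I would take any original selection $\{j_1,\dots,j_n\}$ meeting the threshold $t$ within budget $B$ and reroute each player to a surviving entry. The point to verify is the structural fact that every discarded entry is dominated by a \emph{retained} one; granting this, for each $e_i$ there is a retained $C_i[j_i^\star]$ with $b_{j_i^\star}\le b_{j_i}$ and $p_{j_i^\star}\ge p_{j_i}$. Replacing each $j_i$ by $j_i^\star$ cannot increase the total cost $\sum_i b_{j_i}$ (so the budget is still respected) and cannot decrease the product $\prod_i p_{j_i}$ (so the threshold is still met, as all factors are nonnegative), yielding a valid selection for the new instance. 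The main obstacle is thus purely the bookkeeping of this domination-and-reroute argument---establishing that the Pareto frontier coincides with the retained set and that strict monotonicity of probabilities follows---rather than any genuinely hard step.
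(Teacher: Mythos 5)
Your proposal is correct and takes essentially the same approach as the paper: both delete dominated entries and rely on the same exchange argument (any solution using a deleted entry is rerouted to a cheaper surviving entry with at least as large losing probability, which respects the budget and does not decrease the winning-probability product). The only difference is bookkeeping---the paper deletes one entry at a time, always one dominated by its immediate predecessor, and iterates until monotone, whereas you compute the whole Pareto frontier of each vector in a single linear-time pass; the structural fact you flag (every discarded entry is dominated by a \emph{retained} one) follows immediately from your scan, since a discarded entry's probability is at most $p_{\max}$, which is attained by an already-retained entry of strictly smaller bribe value.
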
  
    \begin{proof}
        Assume that there is a player $e_i\in\{e_1,\ldots,e_n\}$ such that $C_i$ is not monotone, that is, for some $1\le j<j'\le |C_i|$ we have $p_j\ge p_{j'}$, where $C_i[j]=(b_j,p_j)$ and $C_i[j']=(b_{j'},p_{j'})$. Let $j,j'$ be such that $j'-j$ is minimal. Note that we must have that $j=j'-1$.
        Then, we create a bribe vector ${C'_i}$ of length $|C_i|-1$ such that ${C'_i}[\ell]={C_i}[\ell]$ for all $1\le \ell\le j$ and ${C'_i}[\ell]={C_i}[\ell+1]$ for all $j<\ell<|C_i|$.

        We have that the \probname instance with the modified bribe vector is a yes-instance if and only if the original instance is a yes-instance. Let $C_i[j]=(b_j,p_j)$ and $C_i[j']=(b_{j'},p_{j'})$. 
        If there is a solution to the original instance that uses value $b_j$ to bribe player $e_i$ to have losing probability $p_j$, then this is also a valid solution to the modified instance. 
        If there is a solution to the original instance that uses value $b_{j'}$ to bribe player $e_i$ to have losing probability $p_{j'}$, then we can create a valid solution to the modified instance by bribing player $e_i$ with value $b_j$ to have losing probability~$p_j$. Since $b_j<b_{j'}$ and $p_j\ge p_{j'}$, the budget is not violated and the winning probability of $e^*$ is not decreased.
        If there is a solution to the modified instance, then this solution is clearly also valid for the original instance.

        By repeating the described procedure, we can create modified bribe vectors with the desired property such that the \probname instance with the modified bribe vectors is a yes-instance if and only if the original instance is a yes-instance.
    \end{proof}

Due to \cref{lemma:values}, we from now on assume without loss of generality that the bribe vectors of every \probname instance are monotone.

\paragraph{\probnamecup.}

We have expanded the concept of constructive bribery to another well-known tournament, called \emph{cup tournament} (also known as  \emph{knockout tournaments}). In cup tournaments, a set of $n$ \emph{players} is provided (where, for simplicity, $n$ is a power of $2$), along with a {\em seeding} that dictates how to label the $n$ leaves of a complete binary tree representing the players. Given a seeding, the competition is conducted in rounds as follows. As long as the tree has at least two leaves, every two players with a common parent in the tree play against each other, and the winner is promoted to the common parent; then, the leaves of the tree are deleted from it. This process continues until only one player remains, who is then declared the \emph{winner} of the tournament. 

Here also, the objective is to manipulate the players by offering bribes, under a given budget. Based on their assigned bribe vectors, the players can decrease their winning probability, ensuring that a designated favorite player, say $e^{*}$, emerges as the winner with a probability of at least a given threshold. It is essential to note that, unlike in \probname, matches can occur among players who are not favorites. Consequently, the bribe vectors are defined not only in relation to the favorite player but also among the players themselves.

More formally, the \probnamecup is defined as follows.

\problemdef{\probnamecup}
{A set $P$ of players with $|P|=n=2^{n'}$ for some $n'\in\mathbb{N}$, a favorite player $e^{*}\in P$, a seeding $\sigma$ of $P$, a bribe vector $C_i^{j}\in(\mathbb{N}\times [0,1])^{\ell_i}$ for each player $i$ in $P$ against $j$, a probability threshold~$t\in[0,1]$, and a budget $B\in \mathbb{N}$.} 
{Can we raise $e^{*}$'s probability of winning the cup tournament to at least $t$ by bribing the players in $P\setminus\{e^{*}\}$ according to their bribe vectors and not exceeding the budget~$B$?}

\paragraph{Parameterized Complexity.}
 We use the standard concepts and notations from parameterized complexity theory~\cite{downey2013fundamentals,CyganFKLMPPS15,niedermeier2006invitation}.
A \emph{parameterized problem} $L\subseteq \Sigma^{*} \times \mathbb{N}$ is a subset of all instances $(x, k)$ from $\Sigma^{*} \times \mathbb{N}$, where $k$ denotes the \emph{parameter}. A parameterized problem $L$ is in the complexity class XP if there is an algorithm that solves each instance $(x,k)$ of $L$
 in $x^{f(k)}$ time, for some computable function $f$. Furthermore, $L$ is in the class FPT (or fixed-parameter tractable), if there is an algorithm that decides every instance $(x, k)$ for $L$ in $f(k)\cdot|x|^{
O(1)}$ time, where $f$ is any
computable function that depends only on the parameter. If a
parameterized problem $L$ is W[1]-hard, then it is presumably
not fixed-parameter tractable. 
		 

	

\section{Hardness Results} \label{sec:hardness}
In this section, we present our computational hardness results. In particular, we show that \probname is weakly NP-hard and W[1]-hard when parameterized by the number of players. In particular, our results imply that the XP-algorithm for \probname parameterized by the number of players by \citet{mattei2015complexity} presumably cannot be improved to an FPT-algorithm.

\paragraph{Parameterized Hardness of Product Knapsack.}
To obtain our computational hardness result for \probname, we provide a reduction from the so-called \MPK problem. To this end, we first show a parameterized hardness result for \PK (the non-colored version of \MPK).

\problemdef{\PK}
{Items $j\in N:= \{1, \ldots , n\}$ with weights $w_j \in \mathbb{N}$ and profits $v_j \in \mathbb{N}$, a positive knapsack capacity $C \in \mathbb{N}$, and a value $V\in \mathbb{N}$.}
{Does there exist a subset $S\subseteq N$ with $\displaystyle\sum_{j\in S}w_{j}\leq C$ such that $\displaystyle\prod_{j\in S}v_{j}\geq V$?}

  
  
       \PK is known to be weakly NP-hard~\cite{pferschy2021approximating,halman2019bi}. We show that \PK is W[1]-hard when parameterized by the number of items in the knapsack. This result is of independent interest and allows us to obtain parameterized hardness results from our reduction.
  
\begin{theorem} \label{thm:w1}
\PK
is W[1]-hard when parameterized by the number of items in the knapsack.    
\end{theorem}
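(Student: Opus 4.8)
The plan is to reduce from a known $W[1]$-hard problem where the parameter is the number of chosen elements, and the natural candidate is \textsc{Subset Product} or, more promisingly, a variant of \textsc{Multicolored Clique} / \textsc{Multicolored Subset Sum}. Since the goal in \PK is to select a subset $S$ of bounded size with $\sum_{j \in S} w_j \le C$ and $\prod_{j \in S} v_j \ge V$, the cleanest source is a problem with an additive constraint that can be encoded in the weights and a second additive constraint that can be encoded \emph{logarithmically} in the profits (because $\prod v_j \ge V$ is equivalent to $\sum \log v_j \ge \log V$). Thus I would look for a $W[1]$-hard problem asking for a $k$-element solution subject to \emph{two} simultaneous linear constraints — for instance \textsc{$k$-Sum}-type problems, or \textsc{Multicolored Clique} after translating edge/vertex selections into two sum conditions.

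First I would fix the source problem; my preferred choice is a two-constraint variant that is known to be $W[1]$-hard parameterized by solution size, such as \textsc{Multicolored Subset Sum} or a bi-criteria \textsc{Exact Subset Sum} where one must pick exactly $k$ items hitting a target on two independent coordinates. Given such an instance with target values $T_w$ (to be met by weights) and $T_v$ (to be met by the second coordinate), I would create one \PK item per source element, setting $w_j$ equal to the first-coordinate value and setting $v_j = 2^{a_j}$ (or $g^{a_j}$ for a suitable base $g$), where $a_j$ is the second-coordinate value. Then $\prod_{j\in S} v_j = g^{\sum_{j \in S} a_j}$, so the product threshold $V = g^{T_v}$ translates exactly into $\sum_{j\in S} a_j \ge T_v$, while $C = T_w$ handles the weight side. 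The number of items equals the number of source elements (or is linearly bounded in the solution size after the multicolored machinery), giving the needed parameter preservation for a parameterized reduction.

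The main obstacle I anticipate is controlling the \emph{size} of the encoded numbers: writing $v_j = g^{a_j}$ makes the profits exponentially large in $a_j$, so to keep the reduction polynomial I must ensure the second-coordinate values $a_j$ are only polynomially bounded, i.e. encoded in unary in the source instance (or bounded by a polynomial in the instance size). This is exactly why \PK is only \emph{weakly} NP-hard, and it is the delicate point: I need the source $W[1]$-hard problem to have small (polynomially bounded) second-coordinate values while the selection-size parameter stays small. If the off-the-shelf source problem has large target coordinates, I would instead split each large value across several gadget items or use \textsc{Multicolored Clique} where vertices/edges are encoded with polynomially bounded sums (a standard trick assigns to each color class a bounded range, so the total sum and product stay polynomial in magnitude after taking logarithms). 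Handling the two inequalities simultaneously — making the weight bound and the product bound \emph{jointly} force an exact selection — is the second subtlety; I would enforce this by a standard complementary-slackness encoding (pairing each ``lower bound on product'' with an ``upper bound on weight'' so that only the intended $k$-subset satisfies both), so that a feasible \PK solution must correspond to a valid solution of the source instance and vice versa.

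Once the gadget is in place, the correctness argument is a routine two-direction equivalence: a valid $k$-element solution of the source instance maps to a subset $S$ with the prescribed weight sum and exponent sum, hence satisfying both \PK constraints; conversely any feasible $S$ for \PK, after taking $\log_g$ of the product inequality, yields a selection meeting both target coordinates in the source, and the forced size (through the budget/threshold design) guarantees it is a genuine solution. Since the parameter (number of items, equivalently the solution size $k$ up to the multicolored blow-up) is preserved as a function of the source parameter, this is a valid parameterized reduction and establishes $W[1]$-hardness of \PK.
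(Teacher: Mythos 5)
Your starting point---turning the product constraint into a sum via logarithms---is also the germ of the paper's proof, but your instantiation of it has a fatal gap. You encode the second coordinate as an integer exponent, $v_j = g^{a_j}$, and you correctly observe that this forces the $a_j$ to be polynomially bounded (unary) for the reduction to have polynomial size. The problem is that the source problem you then need cannot be W[1]-hard: ``select $k$ items (or one per color class) subject to $\sum_{j\in S} w_j \le C$ and $\sum_{j\in S} a_j \ge A$'' with $0 \le a_j \le \mathrm{poly}(n)$ is solvable in \emph{polynomial} time by a dynamic program whose states are (items considered, items chosen, current $a$-sum) and whose table entries store the minimum attainable weight sum; there are only $\mathrm{poly}(n)$ many states because the $a$-sums are polynomially bounded. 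The same objection kills your fallbacks: splitting a large value across many gadget items blows up the solution size (destroying the parameter), and any \textsc{Multicolored Clique} encoding whose correctness is certified by just two linear constraints with poly-bounded values would place that problem in P. Relaxing the bound to magnitudes $f(k)\cdot\mathrm{poly}(n)$ only upgrades ``in P'' to ``in FPT,'' which still contradicts W[1]-hardness. So with integer exponents, the hardness you need evaporates precisely because of the boundedness you are forced to impose; this is not a technicality you can route around by choosing a better off-the-shelf source problem, because no such problem exists (under standard assumptions).

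The paper escapes this trap by \emph{not} using integer exponents. It reduces from \textsc{Small $k$-Sum} (numbers of magnitude up to $n^{2k}$, shifted to be positive, with target $T$) and sets $w_i = s_i'$, $v_i = (1 - s_i'/T^2)^{-1}$, $C = T$, and $V = (1 - \frac{1}{T} + \frac{1}{2T^2})^{-1}$. These profits are rationals of polynomial bit-size (made integral by clearing denominators), yet their logarithms $-\ln(1 - s_i'/T^2) \approx s_i'/T^2$ retain the full, exponentially fine resolution of the $s_i'$---something no unary-bounded integer exponent can provide. The price is that the log-linearization is only approximate, and the heart of the paper's proof is a Maclaurin-series error analysis showing the error terms are small enough that $\prod_i v_i \ge V$ holds if and only if $\sum_i s_i' \ge T$, which combined with the weight bound forces $\sum_i s_i' = T$ and pins the solution size to exactly $k$. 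This ``profits close to $1$, exponents are scaled-down large numbers'' device is the idea missing from your proposal; once it is in place, the two-directional equivalence you sketch at the end is indeed routine.
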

\begin{proof}
    We modify the reduction by \citet{halman2019bi} that they used to show weakly NP-hardness for \PK to show the W[1]-hardness of \PK when parameterized by the number of items in the knapsack. 
    Instead of reducing from a variant of the \textsc{Partition} problem, as done by \citet{halman2019bi}, we reduce from \textsc{Small $k$-Sum} parameterized by $k$, which is known to be W[1]-hard~\cite{abboud2014losing}. Here, we are given a set of numbers $S=\{s_1, s_2, \ldots, s_n\}$ with $s_i\in[-n^{2k},n^{2k}]$, and we are asked whether there exists $S^\star\subseteq S$ with $|S^\star|=k$ such that $\sum_{s\in S^\star}s=0$.

    Given a \textsc{Small $k$-Sum} instance, we first slightly modify the instance to obtain an instance of a slightly modified version of \textsc{Small $k$-Sum}. Let $s_i'=s_i+2n^{2k}+n^{k^2}$ for all $i\in[n]$ and let $S'=\{s_1',s_2',\ldots, s_n'\}$. Clearly, we have $s_i'\in[n^{2k}+n^{k^2},3n^{2k}+n^{k^2}]$. Now we ask whether there exists $S^\star\subseteq S'$ with $|S^\star|=k$ such that $\sum_{s\in S^\star}s=k(2n^{2k}+n^{k^2})$.
    It is easy to observe that the original \textsc{Small $k$-Sum} instance is a yes-instance if and only if the modified instance is a yes-instance of the modified version of \textsc{Small $k$-Sum}. Furthermore, clearly, the modified version of \textsc{Small $k$-Sum} is also W[1]-hard when parameterized by $k$.

    From now on, assume we are given an instance $S'$ of the modified version of \textsc{Small $k$-Sum}. Denote $T=k(2n^{2k}+n^{k^2})$. We construct an instance of PKP as follows. For each number $s_i'\in S'$ we create an item $i$ with
    \begin{itemize}
        \item $w_i=s_i'$, and
        \item $v_i=(1-\frac{s_i'}{T^2})^{-1}$.
    \end{itemize}
    
    Furthermore, we set $C=T$ and $V=(1-\frac{1}{T}+\frac{1}{2T^2})^{-1}$. Note that the values $v_i$ and the lower bound on the value product $V$ are not necessarily integers, but we can multiply each value with an appropriate number $N$ and multiply $V$ with $N^k$ to obtain integer values.

    The reduction can be computed in polynomial time. It is easy to see that at most $k$ items fit into the knapsack, since the weight of any $k+1$ items is larger than $C$. As we will show in the correctness proof, at least $k$ items are also necessary to obtain a value of at least $V$.

    $(\Rightarrow)$: Assume the modified \textsc{Small $k$-Sum} instance is a yes-instance. Then there is a subset $S^\star\subseteq S'$ with $|S^\star|=k$ such that $\sum_{s\in S^\star}=T$. 
    We remark that this direction of the correctness proof is very similar to the one provided by \citet{halman2019bi}.
    We claim that the items corresponding to the numbers in $S^\star$ form a solution to the \PK instance. 
    
    Let $I^\star = \{i\mid s_i'\in S^\star\}$. First, observe that we have
    \[
    \sum_{i\in I^\star} w_i=\sum_{s\in S^\star} s = T \le C.
    \]
    Hence, the items in $I^\star$ fit into the knapsack.
    
    Next, we show that the items in $I^\star$ have a sufficiently large value.
    We start with the following observations.
    \[
    \prod_{i\in I^\star} v_i = \prod_{s\in S^\star} (1-\frac{s}{T^2})^{-1}\ge V = (1-\frac{1}{T}+\frac{1}{2T^2})^{-1} \Longleftrightarrow \prod_{s\in S^\star} (1-\frac{s}{T^2})\le 1-\frac{1}{T}+\frac{1}{2T^2}
    \]
    By taking the logarithm on both sides of $\prod_{s\in S^\star} (1-\frac{s}{T^2})\le 1-\frac{1}{T}+\frac{1}{2T^2}$ we get
    \begin{equation}\label{eq:1}
    \sum_{s\in S^\star}\ln(1-\frac{s}{T^2})\le \ln(1-\frac{1}{T}+\frac{1}{2T^2}).
    \end{equation}
    
    Note that $0<1-\frac{s}{T^2}<1$ for all numbers $s\in S'$ and $0<1-\frac{1}{T}+\frac{1}{2T^2}<1$. Hence, we can take the Maclaurin series of the logarithms to obtain
    \[
    \ln(1-\frac{s}{T^2})=-\sum_{i=1}^\infty \frac{1}{i}(\frac{s}{T^2})^i,
    \]
    and
    \[
    \ln(1-\frac{1}{T}+\frac{1}{2T^2}) = -\sum_{i=1}^\infty \frac{1}{i}(\frac{1}{T}-\frac{1}{2T^2})^i.
    \]
    
    Using the Maclaurin series of the logarithms in Equation~\ref{eq:1}, we obtain
    \[
    \sum_{s\in S^\star}\sum_{i=1}^\infty \frac{1}{i}(\frac{s}{T^2})^i\ge  \sum_{i=1}^\infty \frac{1}{i}(\frac{1}{T}-\frac{1}{2T^2})^i.
    \]
    
    Now we multiply both sides of the inequality by $T^2$ and split off some summands on both sides to obtain
    \[
    \sum_{s\in S^\star} s + \sum_{s\in S^\star}\sum_{i=2}^\infty \frac{1}{i}\frac{s^i}{T^{2i-2}}\ge T-\frac{1}{2T}+\frac{1}{8T^2} + \sum_{i=3}^\infty \frac{1}{i}\frac{(2T-1)^i}{2^iT^{2i-2}}.
    \]
    
    It follows that Equation~\ref{eq:1} is equivalent to
    \begin{equation}\label{eq:2}
    \sum_{s\in S^\star} s \ge T-\frac{1}{2T}+\frac{1}{8T^2} + \sum_{i=3}^\infty \frac{1}{i}\frac{(2T-1)^i}{2^iT^{2i-2}}-\sum_{s\in S^\star}\sum_{i=2}^\infty \frac{1}{i}\frac{s^i}{T^{2i-2}}
    \end{equation}
    
    Since $S$ is a solution to the modified \textsc{Small $k$-Sum} instance, we have that $\sum_{s\in S^\star} s=T$. Hence, it remains to show that 
    \[
    -\frac{1}{2T}+\frac{1}{8T^2} + \sum_{i=3}^\infty \frac{1}{i}\frac{(2T-1)^i}{2^iT^{2i-2}}-\sum_{s\in S^\star}\sum_{i=2}^\infty \frac{1}{i}\frac{s^i}{T^{2i-2}}\le 0.
    \]
    
    We can assume w.l.o.g.\ that $T\ge 4$ and we have $0<\frac{1}{T}<1$. Hence, we get
    \[
    \sum_{i=3}^\infty \frac{1}{i}\frac{(2T-1)^i}{2^iT^{2i-2}}\le \frac{1}{3} \sum_{i=3}^\infty \frac{(2T)^i}{2^iT^{2i-2}}=\frac{1}{3}\sum_{i=1}^\infty\frac{1}{T^i}=\frac{1}{3(T-1)}.
    \]
    
    It is easy to see that $-\frac{1}{2T}+\frac{1}{8T^2}+\frac{1}{3(T-1)}<0$ for $T\ge 4$ and we can conclude that Equation~\ref{eq:2} is satisfied. It follows that $\prod_{i\in I^\star} v_i\ge V$.
    
    $(\Leftarrow)$: Assume the constructed instance of \PK is a yes-instance and we have a solution $I^\star$. We claim that the numbers corresponding to the items in $I^\star$ form a solution to the modified \textsc{Small $k$-Sum} instance. Let $S^\star=\{s_i'\mid i\in I^\star\}$.
    First, observe that we have 
    \[
    \sum_{s\in S^\star}s=\sum_{i\in I^\star} w_i \le C = T.
    \]
    
    Furthermore, we have that $|S^\star|\le k$, since the weight of any $k+1$ items is at least $(k+1)n^{k^2}$ which is larger than $T$ for sufficiently large $k$.

    It remains to show that $\sum_{s\in S^\star}s\ge T$ and $|S^\star|\ge k$. We show that $\sum_{s\in S^\star}s\ge T$ which together with the observation above implies that $\sum_{s\in S^\star}s= T$. This immediately also implies that $|S^\star|\ge k$ since the sum of any $k-1$ numbers in $S'$ is strictly smaller than $T$ for sufficiently large $k$.
    Since all numbers in $S'$ are integers, showing $\sum_{s\in S^\star}s\ge T$ is equivalent to showing $\sum_{s\in S^\star}s> T-1$.
    Furthermore, we have that 
    \[
    \prod_{i\in I^\star} v_i = \prod_{s\in S^\star} (1-\frac{s}{T^2})^{-1}\ge V.
    \]
    
    From Equation~\ref{eq:2} it follows that $\sum_{s\in S^\star}s> T-1$ is equivalent to
    \[
    -\frac{1}{2T}+\frac{1}{8T^2} + \sum_{i=3}^\infty \frac{1}{i}\frac{(2T-1)^i}{2^iT^{2i-2}}-\sum_{s\in S^\star}\sum_{i=2}^\infty \frac{1}{i}\frac{s^i}{T^{2i-2}}>-1.
    \]
    
    For the next step, observe that for all $s\in S'$ we have that $s\le \frac{2}{k}T$ and $|S^\star|\le k$. Hence, we get
    \[
    \sum_{s\in S^\star}\sum_{i=2}^\infty \frac{1}{i}\frac{s^i}{T^{2i-2}}<k\sum_{i=2}^\infty \frac{1}{i}\frac{(\frac{2}{k}T)^i}{T^{2i-2}}=k\sum_{i=2}^\infty \frac{2^i}{ik^iT^{i-2}}<\frac{2}{k}+\frac{8}{3k^2}\sum_{i=1}^\infty \frac{2^i}{T^i}=\frac{2}{k}+\frac{16}{3k^2(T-1)}.
    \]
    
    We can assume w.l.o.g.\ that $k\ge 4$, then the maximum value of $\frac{2}{k}+\frac{16}{3k^2(T-1)}$ is attained at $k=4$ and we have
    \[
    \frac{2}{k}+\frac{16}{3k^2(T-1)}\le \frac{1}{32}+\frac{1}{12(T-1)}.
    \]
    
    It follows that
    \[
    -\frac{1}{2T}+\frac{1}{8T^2} + \sum_{i=3}^\infty \frac{1}{i}\frac{(2T-1)^i}{2^iT^{2i-2}}-\sum_{s\in S^\star}\sum_{i=2}^\infty \frac{1}{i}\frac{s^i}{T^{2i-2}}>-\frac{1}{2T}+\frac{1}{8T^2}-\frac{1}{32}-\frac{1}{12(T-1)}.
    \]
    
    It remains to show that 
    \[
    -\frac{1}{2T}+\frac{1}{8T^2}-\frac{1}{32}-\frac{1}{12(T-1)}>-1,
    \]
    
    which is easy to verify. We can conclude that  $\sum_{s\in S^\star}s= T$ and hence also that $|S^\star|=k$.
\end{proof}


In the multicolored version of \PK, each item is assigned to a specific color class, and the objective is to select precisely one item from each color class to fill our knapsack. More formally, it is defined as follows.
\problemdef{\MPK}
{Items $j\in N:= \{1, \ldots , n\}$ with weights $w_j \in \mathbb{N}$ and profits $v_j \in \mathbb{N}$, a partition of $N$ into $k (\in \mathbb{N})$ sets $X_{1},\ldots,X_{k}$, a positive knapsack capacity $C \in \mathbb{N}$, and a value $V\in \mathbb{N}$.}
{Does there exist a subset $S\subseteq N$ containing exactly one item from each $X_{i}$ with $\displaystyle\sum_{j\in S}w_{j}\leq C$ such that $\displaystyle\prod_{j\in S}v_{j}\geq V$?}

In the realm of parameterized complexity, when a problem is parameterized by the solution size, it is well-known that by applying the color-coding technique (see \cite{CyganFKLMPPS15}), we can obtain a parameterized reduction (one-to-many) from the original problem to its multicolored counterpart parameterized by the number of colors. Thus, through a straightforward parameterized reduction, we get the following corollary of \cref{thm:w1}.
\begin{corollary}\label{thm:mpkw1h}
    \textsc{Multicolored} \PK is W[1]-hard when parameterized by the number of colors.
\end{corollary}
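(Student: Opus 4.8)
The plan is to obtain the result as a direct application of the color-coding technique to the W[1]-hardness of \PK established in \cref{thm:w1}, reducing from \PK parameterized by the number~$k$ of selected items to \MPK parameterized by the number of colors. The key observation is that, in the instances produced in the proof of \cref{thm:w1}, every feasible solution selects exactly $k$ items; hence I may treat the source problem as asking for a subset~$S$ with $|S|=k$, $\sum_{j\in S}w_j\le C$, and $\prod_{j\in S}v_j\ge V$.

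First I would fix a $k$-perfect family of hash functions $\mathcal{F}$ from $[n]$ to $[k]$---that is, a family such that for every $k$-element subset $S\subseteq[n]$ there is some $h\in\mathcal{F}$ that is injective on~$S$---of size $f(k)\cdot\log n$ and constructible in FPT time (standard, see~\cite{CyganFKLMPPS15}). For each $h\in\mathcal{F}$ I would build one \MPK instance: keep the items, weights~$w_j$, profits~$v_j$, capacity~$C$, and target value~$V$ unchanged, and define the color classes by $X_i=h^{-1}(i)$ for $i\in[k]$. This produces $|\mathcal{F}|=f(k)\cdot\log n$ instances in a one-to-many fashion, each having exactly~$k$ colors, so the target parameter is bounded by a function of the source parameter.

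For correctness I would argue both directions. If the \PK instance has a size-$k$ solution~$S$, then by the defining property of $\mathcal{F}$ some $h\in\mathcal{F}$ is injective on~$S$, so $S$ contains exactly one item of each color class of the corresponding \MPK instance; since the weights and the value product are unchanged, $S$ certifies a yes-answer there. Conversely, any solution to one of the \MPK instances picks exactly one item per color and hence exactly $k$ items with $\sum_{j\in S}w_j\le C$ and $\prod_{j\in S}v_j\ge V$, which is a valid \PK solution for the original instance. Thus the \PK instance is a yes-instance if and only if at least one of the produced \MPK instances is, i.e.\ the disjunction over $\mathcal{F}$ preserves the answer.

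I expect the main (and essentially only) subtlety to be formal rather than computational: verifying that this disjunctive one-to-many construction is a legitimate parameterized reduction for transferring W[1]-hardness. Since FPT is closed under running an algorithm on each of $f(k)\cdot\mathrm{poly}(n)$ instances whose parameter is~$k$, a hypothetical FPT algorithm for \MPK parameterized by the number of colors would yield one for \PK parameterized by~$k$, contradicting \cref{thm:w1}. The equality ``exactly one item per color forces $|S|=k$'' is what makes the backward direction clean, while the perfect-hash-family property is what makes the forward direction go through; both are standard, so no difficult estimation is anticipated.
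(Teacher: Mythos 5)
Your proposal is correct and takes essentially the same route as the paper: the paper also obtains the corollary from \cref{thm:w1} by invoking the standard color-coding technique as a one-to-many parameterized reduction from \PK (parameterized by solution size) to its multicolored counterpart (parameterized by the number of colors). Your write-up merely spells out what the paper leaves as a citation---the perfect-hash-family derandomization, the exactly-$k$-items observation, and the OR-semantics of the one-to-many reduction---all of which are exactly the ingredients the paper's argument relies on.
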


 \paragraph{Hardness of \probname.}
Using the previous results, in particular, \cref{thm:mpkw1h}, we now proceed to establish our main hardness result.
\begin{theorem} \label{thm:weakly}
\probname is weakly NP-hard and W[1]-hard when parameterized by the number of players.
\end{theorem}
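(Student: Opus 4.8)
The plan is to reduce from \MPK, which by \cref{thm:mpkw1h} is W[1]-hard parameterized by the number $k$ of colors and which is also weakly NP-hard (this follows by a trivial polynomial reduction from \PK: give each item its own color class together with a dummy item of weight $0$ and profit $1$ that models ``not selecting'' the item, so choosing one item per class encodes choosing an arbitrary subset). Given a \MPK instance with color classes $X_1,\dots,X_k$, weights $w_j$, profits $v_j$, capacity $C$, and target value $V$, I build a \probname instance with exactly $k+1$ players $e_1,\dots,e_k,e^*$, associating one player $e_i$ with each class $X_i$. Choosing a bribe entry for $e_i$ will correspond to selecting one item from $X_i$, so a complete bribing scheme corresponds exactly to a transversal picking one item per class, which matches the \MPK feasibility structure. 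Since the parameter maps $k\mapsto k+1$ and the construction runs in polynomial time, this single reduction yields both claimed hardness results once correctness is verified.

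The technical care lies in forcing the images to be \emph{legal} bribe vectors, which must have strictly increasing bribe values and must start with a $(0,p_1)$ entry, with probabilities in $[0,1]$. First I prune, inside each class $X_i$, every dominated item (item $b$ is dominated if some $a\in X_i$ has $w_a\le w_b$ and $v_a\ge v_b$); this never changes the \MPK answer, because a dominating item can always replace a dominated one without increasing the weight or decreasing the product. After pruning, each class becomes a ``staircase'' with strictly increasing weights and strictly increasing profits, which gives the strict monotonicity of bribe values directly and, as a bonus, monotone vectors as in \cref{lemma:values}. To obtain the forced zero-cost first entry, I subtract from every weight in $X_i$ its class minimum $w_i^{\min}$, so the lightest item acquires cost $0$ and becomes the first entry, and I set the budget to $B=C-\sum_i w_i^{\min}$. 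Because exactly one item per class is chosen, the total shift $\sum_i w_i^{\min}$ is a constant, so $\sum_i (w_{j_i}-w_i^{\min})\le B \iff \sum_i w_{j_i}\le C$; if $C<\sum_i w_i^{\min}$ I simply output a trivial no-instance.

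For the probabilities I normalize the profits: with $M=\max_j v_j$ I set the entry of item $j\in X_i$ to $(w_j-w_i^{\min},\,v_j/M)\in\mathbb{N}\times[0,1]$ and set the threshold $t=V/M^k$ (outputting a trivial no-instance if $V>M^k$, where the product can never reach $V$). Then for any transversal the winning probability of $e^*$ equals $\prod_{i=1}^k v_{j_i}/M=(\prod_i v_{j_i})/M^k$, which is at least $t$ exactly when $\prod_i v_{j_i}\ge V$. Together with the budget equivalence above this shows the two instances are equivalent. All produced numbers are rationals of bit-length polynomial in the input (in particular $M^k$ has bit-length $k\log M$, bounded by the input size), so the reduction is genuinely polynomial and preserves weak NP-hardness; and since it maps the parameter $k$ to $k+1$ players in FPT time, it is a valid parameterized reduction, establishing W[1]-hardness with respect to the number of players.

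The main obstacle I expect is precisely the interplay between the rigid bribe-vector format (distinct increasing bribe values together with a forced zero-cost first entry) and the unconstrained item weights of \MPK. The dominance-pruning-plus-shift argument is what reconciles them, and the delicate point is to verify that it neither alters the \MPK answer nor disturbs the exact product-to-threshold translation, while keeping the normalized probabilities and threshold within $[0,1]$ and of polynomial encoding size.
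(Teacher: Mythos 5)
Your proposal is correct and takes essentially the same route as the paper: a parameterized polynomial-time reduction from \MPK in which each color class becomes one player whose bribe vector lists that class's (weight, profit) pairs, with budget $B=C$ and threshold $t=V$. Your extra normalization steps (dominance pruning, shifting weights so the lightest entry costs $0$, and scaling profits by $M^{-1}$ so that probabilities and the threshold lie in $[0,1]$) handle format requirements of bribe vectors that the paper's proof passes over silently, and your explicit many-one reduction giving weak NP-hardness of \MPK likewise fills in a detail the paper leaves implicit.
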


\begin{proof}
We present a parameterized polynomial-time reduction from \textsc{Multicolored} \PK parameterized by the number of colors to \probname parameterized by the number of players. By \cref{thm:mpkw1h}, we have that \MPK is W[1]-hard when parameterized by the number of colors.
    
Let $\psi=(X_{1},\ldots,X_{k},C,V)$ be a given instance of \textsc{Multicolored} \PK, where $v_{1}^{i},\ldots,v_{|X_{i}|}^{i}$ denote the profits and $w_{1}^{i},\ldots,w_{|X_{i}|}^{i}$ denote the weights of the items present in the color class $X_{i}$ for every $i\in[k]$. Here, we assume without loss of generality that $w_{1}^{i}\leq w_{2}^{i} \leq \cdots \leq w_{|X_{i}|}^{i}$. Now, we construct an instance $\phi=(P,\overline{C},t,B)$ of \probname with $k+1$ players, where $P=\{e_1,\ldots,e_k,e^*\}$ and $\overline{C}$ contains the bribe vectors for every player in $P\setminus \{e^*\}$, as follows: 
    
    Player $e^{*}$ is the initial champ. Here, we informally say that for each $i\in [k]$, the player $e_i$ corresponds to the color class $X_{i}$. Also, for each $i\in [k]$, we set the bribe vector $C_{i}$ corresponding to the items in the color class $X_{i}$. Formally, the $j$th entry of the vector $C_{i}$ is defined as $C_{i}[j]=(w_{j}^{i},v_{j}^{i})$. Here, note that the length of the vector $C_{i}$ is $|X_{i}|$. Finally, we set budget $B=C$, and we set the threshold value $t=V$.


   This finishes the construction, which can clearly be done in polynomial time. Note that the number of players in the constructed instance is $k+1$.
    Now, we claim that $\psi$ is a yes-instance of \textsc{Multicolored} \PK if and only if $\phi$ is a yes-instance of \probname.

   $(\Rightarrow)$: Assume that $\psi$ is a yes-instance of  $\textsc{Multicolored}$ \PK. Let $S$ be a solution of $\psi$ such that  $\sum_{j\in S}w_{j}\leq C$ and $\prod_{j\in S}v_{j}\geq V$. Now, we bribe the players in $P\setminus \{e^*\}$ as follows.
   Without loss of generality, assume that $w_{j}$ and $v_{j}$ correspond to the weight and price of the item in the solution that is present in the color class $X_{j}$. Then by construction, we have that for player $e_{j}\in P\setminus\{e^{*}\}$ there must be some entry in the bribe vector $C_{j}$ that corresponds to the pair $(w_{j},v_{j})$. We bribe player $e_j$ with value $w_j$ to lower their winning probability against $e^*$ to~$v_j$. By construction, we do not exceed the budget, since $C=B$. Furthermore, the product of the winning probabilities $v_j$ of $e^*$ against players $e_j$ equals at least the threshold $t=V$.

    $(\Leftarrow)$: Assume that $\phi$ is a yes-instance of \probname. Hence, there exists a set of bribes of total cost at most $B=C$, such that the probability that $e^{*}$ wins the tournament is at least $t=V$. Now, for every $i\in [k]$, there must be an entry in $C_{i}$ that specifies how player $e_i$ is bribed. Let player $e_i$ be bribed according to $C_i[j]=(w^i_{j},v^i_{j})$. We put the corresponding item in $X_{i}$ into the knapsack. Thus, we have constructed a solution for \textsc{Multicolored} \PK of total weight at most $C$, where the product of the values is at least $V$ since it equals the winning probability of player $e^*$.
\end{proof}

\paragraph{Parameterized Hardness of \probnamecup}
Finally, the following result establishes that when the parameter is the number of players, the existence of an FPT algorithm remains unlikely for \probnamecup as well.

\begin{theorem} \label{corr:whard}
  \probnamecup is W[1]-hard when parameterized by the number of players.
\end{theorem}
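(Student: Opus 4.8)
The plan is to give a parameterized reduction from \probname, which is W[1]-hard when parameterized by the number of players by \cref{thm:weakly}, to \probnamecup. Given a \probname instance with players $\{e_1,\dots,e_k,e^*\}$, budget $B$, threshold $t$, and (monotone) bribe vectors $C_1,\dots,C_k$, I would construct a cup tournament on exactly $2^k$ players in which $e^*$ is forced, along every winning line of play, to meet $e_1,\dots,e_k$ one after another. The observation that makes this feasible is that a parameterized reduction only requires the new parameter to be bounded by a \emph{computable} function of the old one: here the number of players of the cup instance will be $2^k$, which is exponential in $k$ but still depends on $k$ alone, so producing the instance (including the $\binom{2^k}{2}$ pairwise bribe vectors) in $2^{O(k)}\cdot|x|^{O(1)}$ time is a legitimate FPT reduction.

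\emph{Construction.} I would arrange the $2^k$ leaves of a complete binary tree of depth $k$ in a caterpillar seeding: place $e^*$ at the leftmost leaf; let its round-$1$ opponent be $e_1$; and for $r=2,\dots,k$ let the sibling subtree that $e^*$ meets in round $r$ (which has $2^{r-1}$ leaves) contain $e_r$ together with $2^{r-1}-1$ fresh dummy players. This uses $1+\sum_{r=1}^{k}2^{r-1}=2^k$ players, as required. I would then fix the match probabilities and bribe vectors so that: (i) every dummy beats $e^*$ with probability $1$, and this cannot be lowered (the dummy's bribe vector against $e^*$ has only the $0$-cost entry); (ii) each challenger $e_r$ beats every dummy with probability $1$, again unbribeable, so $e_r$ wins its subtree with probability $1$; and (iii) the bribe vector of $e_r$ against $e^*$ mirrors $C_r$, i.e.\ paying $b_j$ makes $e^*$ beat $e_r$ with probability exactly $p_j$, where $C_r[j]=(b_j,p_j)$. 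Outcomes among dummies are irrelevant and may be fixed arbitrarily and unbribeably. Finally I set the cup budget and threshold equal to $B$ and $t$.

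\emph{Correctness and the main obstacle.} Under this construction the winner of the subtree that $e^*$ meets in round $r$ is $e_r$ with probability $1$, so $e^*$ wins the tournament with probability exactly $\prod_{r=1}^{k}p_{j_r}$ at cost $\sum_{r=1}^{k}b_{j_r}$, precisely the quantities governing the \probname instance; hence a budget-$B$ bribery raising the cup win probability to $t$ exists iff the original instance is a yes-instance. The step I expect to require the most care is ruling out ``cheating'' deviations by the briber that route a dummy rather than a challenger to face $e^*$; one must argue such deviations are never profitable. This is exactly what property (i) provides: since $e^*$ loses to any dummy with certainty and cannot bribe its way out, any play in which some subtree winner is a dummy gives $e^*$ winning probability $0$, so an optimal bribery keeps every $e_r$ on the path to $e^*$, collapsing the cup instance onto the challenge-the-champ instance. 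Together with the parameter bound $2^k=g(k)$ and the FPT running time noted above, this establishes W[1]-hardness of \probnamecup parameterized by the number of players.
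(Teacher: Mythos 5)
Your proposal is correct and follows essentially the same route as the paper's proof: a reduction from \probname using a caterpillar seeding on $2^k$ players, where each challenger $e_r$ sits in the size-$2^{r-1}$ subtree met by $e^*$ in round $r$, padded with dummies whom the challengers beat unbribeably with probability $1$, and with the bribe vectors against the favorite copied from the original instance. Your explicit remark that the blow-up to $2^k$ players is fine because a parameterized reduction only needs the new parameter and running time to be bounded by a computable function of $k$ is a welcome clarification that the paper glosses over by calling it a ``polynomial-time'' reduction.
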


\begin{proof}
We present a parameterized polynomial-time reduction from \probname to \probnamecup, with both problems being parameterized by the number of players. By \cref{thm:weakly}, we know that \probname is W[1]-hard when parameterized by the number of players.

 Let $\psi=(P,C,t,B)$ be an instance of \probname with $n$ players, where $P=\{e_1,\ldots,e_{n-1},e^*\}$ and $C$ contains the bribe vectors for every player in $P\setminus \{e^*\}$. Now, we construct an instance $\phi=(P',\widehat{e},D,t',B',\sigma)$ of \probnamecup with $2^{n-1}$ players, where $P'=\{e'_1,\ldots,e'_{n-1},\widehat{e},d_{1},\ldots,d_{2^{n-1}-n}\}$. Here, we refer to $\{e'_1,\ldots,e'_{n-1}\}$ as the \emph{main players}, and  $\{d_{1},\ldots,d_{2^{n-1}-n}\}$ as the \emph{dummy players}. Let $B'=B$ and $t'=t$. Also, let $D$ contain the bribe vectors for each player in $P'\setminus \{\widehat{e}\}$, structured as follows: The bribe vector for $e'_{i}$ against $\widehat{e}$ in $D$ is same as the bribe vector for $e_{i}$ against $e^{*}$ in $C$ for every $i\in [n-1]$. The remaining bribe vectors (for dummy players against every other player and for main players against everyone except $\widehat{e}$) have only one entry, indicating the probability of losing against the corresponding player. 
 Additionally, the values of bribe vectors are set in such a way that every main player defeats the dummy players with a probability of 1, and the remaining players can defeat each other with any probability $\frac{1}{2}$. Finally, the tournament seeding is defined as follows: $\widehat{e}$ is seeded in position 1, and $e'_{i}$ is seeded in position $2^{i-1}+1$, where $i\in [n-1]$. The dummy players are arbitrarily seeded in the remaining positions.
  
This finishes the construction. Now, we claim that $\psi$ is a yes-instance of \probname if and only if~$\phi$ is a yes-instance of \probnamecup.

   $(\Rightarrow)$: Assume that $\phi$ is a yes-instance of \probname. Hence, there exists a set of bribes of total cost at most $B$, such that the probability that $e^{*}$ wins the tournament is at least $t$. Now, for every $i\in [n-1]$, there must be an entry in $C_{i}$ that specifies how player $e_i$ is bribed. Let player~$e_i$ be bribed according to $C_i[j]$. Now, with the specified seeding and probabilities in $\psi$, player $\widehat{e}$ is guaranteed to face player $e'_{i}$ in round $i$ for every $i\in [n-1]$. We make sure that the corresponding player $e'_{i}$ is bribed according to $C_i[j]$ against player $\widehat{e}$, i.e., $D_{i}^{\widehat{e}}[j]=C_{i}[j]$ value of the bribe vector is chosen for $e'_{i}$. Additionally, player $e'_{i}$ progresses to round $i$ with a probability of 1, without using anything from the budget. Consequently, the minimum winning probability for $\widehat{e}$ in $\psi$ is assured to be at least $t=t'$. Thus, we have constructed a solution for \probnamecup.
   
    $(\Leftarrow)$: Assume that $\psi$ is a yes-instance of \probnamecup. Hence, there exists a set of bribes with a total cost of at most $B'$, ensuring that the probability of $\widehat{e}$ winning the tournament is at least $t'$. Note that with the specified seeding and probabilities, player~$\widehat{e}$ is guaranteed to face player $e'_{i}$ in round $i$ for every $i\in [n-1]$. 
    Now, for every $i\in [n-1]$, there must be an entry in $D_{i}^{\widehat{e}}$ specifying how player $e'_i$ is bribed against $\widehat{e}$. 
Let player $e'_i$ be bribed according to $D_i^{\widehat{e}}[j](=C_{i}[j])$. We construct a solution for \probname by ensuring that the corresponding player~$e_{i}$ is bribed according to $C_i[j]$. Given that $B=B'$ and $t=t'$, we have constructed a solution for \probname.
\end{proof}



\section{Algorithmic Results} \label{sec:algorithmic}
In this section, we present our algorithmic results for \probname. We show two fixed-parameter tractability results. One is for the number of distinct bribe values as a parameter, and the other is for the number of distinct probability values as a parameter. Both algorithms are similar and are obtained by mixed integer linear program (MILP) formulations for \probname. 
\problemdef{\textsc{Mixed Integer Linear Program} (MILP)}{A vector $x$ of $n$ variables of which some are considered integer variables, a constraint matrix $A\in\mathbb{R}^{m\times n}$, two vectors $b\in\mathbb{R}^m$, $c\in \mathbb{R}^n$, and a target value $t\in \mathbb{R}$.}{Is there an assignment to the variables such that all integer variables are set to integer values, $c^\intercal x\ge t$, $Ax\le b$, and $x\ge 0$?}
Note that MILPs are also often considered to be optimization problems where instead of requiring $c^\intercal x\ge t$, the value of $c^\intercal x$ should be maximized.
MILPs are known to be solvable in FPT-time when the number of integer variables is the parameter~\cite{Lenstra1983Integer,dadush2011enumerative}. 
\begin{theorem}[\cite{Lenstra1983Integer,dadush2011enumerative}]\label{thm:MILP}
    MILP is fixed-parameter tractable when parameterized by the number of integer variables.
\end{theorem}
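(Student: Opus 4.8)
The plan is to establish the statement through the classical machinery of integer programming in fixed dimension, which is precisely the content of the cited works \cite{Lenstra1983Integer,dadush2011enumerative}. Let $p$ denote the number of integer variables and write the variable vector as $x=(y,z)$, where $y\in\mathbb{R}^p$ collects the integer variables and $z$ the continuous ones. First I would fold the objective requirement into the constraint system, so that the task becomes pure feasibility: decide whether the rational polyhedron $Q=\{(y,z):Ax\le b,\ c^\intercal x\ge t,\ x\ge 0\}$ contains a point with $y\in\mathbb{Z}^p$. This is legitimate because deciding feasibility with one extra linear constraint $c^\intercal x\ge t$ is no harder than plain feasibility.

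The next step eliminates the continuous variables. For any fixed value of $y$, deciding whether some $z$ satisfies $(y,z)\in Q$ is a linear program, solvable in polynomial time. Equivalently, the projection $\pi(Q)=\{y:\exists z,\ (y,z)\in Q\}$ is again a rational polyhedron in $\mathbb{R}^p$, and I would access it through a polynomial-time membership/separation oracle realized by these LPs, without ever writing down its (possibly exponentially many) facets explicitly. Thus the problem reduces to the following: does the oracle-given polyhedron $\pi(Q)\subseteq\mathbb{R}^p$ contain an integer point? This is integer feasibility in dimension $p$.

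The heart of the argument is then Lenstra's algorithm for integer feasibility in fixed dimension, whose engine is the flatness theorem from the geometry of numbers: any convex body containing no lattice point has lattice width bounded by a function $f(p)$ of the dimension alone. Algorithmically, using LLL lattice basis reduction one computes, in polynomial time, either an integer point of $\pi(Q)$ directly, or a nonzero integer vector $v$ in whose direction $\pi(Q)$ is flat, i.e.\ $\max_{y\in\pi(Q)}v^\intercal y-\min_{y\in\pi(Q)}v^\intercal y\le f(p)$. In the latter case the integer linear form $v^\intercal y$ takes only $O(f(p))$ integer values on integer points of $\pi(Q)$; I would branch over each admissible value $w$, append the equation $v^\intercal y=w$ (after a unimodular change of basis turning $v$ into a coordinate direction), and recurse on the resulting problem, whose effective dimension has dropped by one. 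The recursion has depth at most $p$ and branching factor $O(f(p))$, so it produces at most $g(p)\le (f(p)+1)^p$ leaves for a computable $g$, each solving polynomially many oracle LPs. The total running time is therefore $g(p)\cdot|\text{input}|^{O(1)}$, which is the claimed FPT bound with the number of integer variables as parameter.

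I expect the main obstacle to be the effective flatness step: turning the existential flatness theorem into a polynomial-time procedure that actually outputs either a lattice point or a short flat direction. This is where lattice basis reduction is essential, and where the parameter dependence $f(p)$ — and hence the overall function $g(p)$ — is determined; improving this dependence is exactly the contribution of \cite{dadush2011enumerative}. A secondary, more routine concern is bit-complexity bookkeeping: one must verify that the oracle LPs, the lattice reductions, and the coordinate changes keep all numbers polynomially bounded in the original encoding, so that the per-leaf cost remains polynomial and the clean separation between the $g(p)$ factor and the polynomial factor is preserved.
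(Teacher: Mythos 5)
This theorem is not proved in the paper at all: it is imported as a black box from the cited works \cite{Lenstra1983Integer,dadush2011enumerative}, and your sketch is a faithful reconstruction of exactly that cited machinery---folding the objective into the constraints, projecting out the continuous variables via LP oracles, and running Lenstra-style flatness/LLL branching with recursion on the dimension, with the parameter dependence sharpened by the enumerative techniques of \cite{dadush2011enumerative}. Your argument is correct and takes essentially the same route the paper relies on by citation, so there is nothing to reconcile.
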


We build our MILP formulations in a specific way that ensures that there always exist optimal solutions where \emph{all} variables are set to integer values. To this end, we establish a general result concerning MILPs that, to the best of our knowledge, has not been employed before. 
While this result can straightforwardly be derived from known results, it may be of independent interest.
\begin{proposition}\label{prop:MILP}
    Let the following be an MILP.
    \[
    \max c^\intercal x \text{ subject to } Ax\le b, x\ge 0.
    \]
    Let $x = (x_{\text{int}} \ x_{\text{frac}})^\intercal$, where $x_{\text{int}}$ (resp.\ $x_{\text{frac}}$) denote the integer (resp.\ fractional) variables of the MILP. Let $A = (A_{\text{int}} \ A_{\text{frac}})$ where $A_{\text{int}}$ are the first $|x_{\text{int}}|$ columns of $A$, that is, the coefficients of the integer variables, and $A_{\text{frac}}$ are the remaining columns, that is, the coefficients of the fractional variables.
    If $A_{\text{frac}}$ is totally unimodular, then there exists an optimal solution to the MILP where all variables are set to integer values.
\end{proposition}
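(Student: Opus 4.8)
The plan is to freeze the integer variables at an optimal integral assignment and observe that the leftover problem in the continuous variables is an ordinary linear program whose constraint matrix is exactly $A_{\text{frac}}$; the totally unimodular structure of $A_{\text{frac}}$ then forces an integral optimal vertex. Concretely, I would first take an optimal solution $x^\star=(x_{\text{int}}^\star\ x_{\text{frac}}^\star)^\intercal$, which exists whenever the MILP is feasible and bounded (otherwise there is nothing to prove). Fixing $x_{\text{int}}=x_{\text{int}}^\star$ turns the MILP into the linear program $\max\,c_{\text{frac}}^\intercal x_{\text{frac}}$ subject to $A_{\text{frac}}x_{\text{frac}}\le b-A_{\text{int}}x_{\text{int}}^\star$ and $x_{\text{frac}}\ge 0$, where $c_{\text{frac}}$ collects the objective coefficients of the continuous variables.

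Next I would check that $x_{\text{frac}}^\star$ is optimal for this residual LP. It is feasible, since $Ax^\star\le b$ gives $A_{\text{frac}}x_{\text{frac}}^\star\le b-A_{\text{int}}x_{\text{int}}^\star$; and if some feasible $\tilde x_{\text{frac}}$ had strictly larger objective, then $(x_{\text{int}}^\star\ \tilde x_{\text{frac}})^\intercal$ would be MILP-feasible (it satisfies $Ax\le b$, $x\ge 0$, and keeps the integer variables integral) with strictly larger total objective, contradicting the optimality of $x^\star$. Hence the residual LP is bounded and attains its optimum.

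The crux is the integrality step. Write $d:=b-A_{\text{int}}x_{\text{int}}^\star$ and $Q:=\{x_{\text{frac}}\ge 0 : A_{\text{frac}}x_{\text{frac}}\le d\}$. Since $x_{\text{int}}^\star$ is integral and $A_{\text{int}},b$ are integral (as is standard for the input data and as holds for the MILPs we construct), $d$ is an integer vector. As $A_{\text{frac}}$ is totally unimodular, the Hoffman--Kruskal theorem tells us that $Q$ is an integral polyhedron, i.e.\ all of its vertices are integer points. Because the nonnegativity constraints make $Q$ pointed, and the residual LP is bounded and feasible, its optimum is attained at a vertex $\hat x_{\text{frac}}$ of $Q$, which is therefore integral. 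Then $(x_{\text{int}}^\star\ \hat x_{\text{frac}})^\intercal$ is feasible for the MILP, has all variables integral, and achieves the optimal objective value, completing the argument.

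I expect the only genuine obstacle to be the bookkeeping around the integrality of $d$ and the existence of an integral optimal vertex: the former relies on the integrality of $b$ and $A_{\text{int}}$ together with the integrality of $x_{\text{int}}^\star$, and the latter on $Q$ being pointed (guaranteed by $x_{\text{frac}}\ge 0$), so that a bounded LP over $Q$ has a vertex optimum. Everything else is a direct reduction to the standard total-unimodularity integrality theorem.
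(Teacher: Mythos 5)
Your proof follows essentially the same route as the paper's: freeze the integer variables at an optimal MILP solution $x^\star$, observe that the residual problem in the continuous variables is an LP with constraint matrix $A_{\text{frac}}$ and right-hand side $b-A_{\text{int}}x^\star_{\text{int}}$, invoke total unimodularity to get an integral optimal solution of that LP, and recombine it with $x^\star_{\text{int}}$ into an all-integer optimum of the MILP. The only difference is that you make explicit two points the paper leaves implicit --- that Hoffman--Kruskal requires the residual right-hand side to be integral (hence $b$ and $A_{\text{int}}$ integral, which does hold for the MILPs constructed in the paper, though not for the proposition as literally stated with $b\in\mathbb{R}^m$), and that the pointedness of the feasible region guarantees a vertex optimum --- which is careful bookkeeping rather than a different argument.
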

\begin{proof}
    Let $x^\star$ be an optimal solution to the MILP. Suppose that $A_{\text{frac}}$ (as defined in \cref{prop:MILP}) is totally unimodular. Let $c^\star$ denote the objective value achieved by $x^\star$. 
    Let $x^\star_{\text{int}}$ be the assignment to the integer variables, and let $x^\star_{\text{frac}}$ be the assignment to the fractional variables of the MILP in the optimal solution $x^\star$. Let $c=(c_{\text{int}} \ c_{\text{frac}})^\intercal$, where $c_{\text{int}}$ are the first $|x_{\text{int}}|$ entries of $c$, that is, the coefficients of the integer variables, and $c_{\text{frac}}$ are the remaining entries, that is, the coefficients of the fractional variables.
    Define the following linear program (LP):
    \[
    \max c_{\text{frac}}^\intercal x_{\text{frac}} \text{ subject to } A_{\text{frac}}x_{\text{frac}}\le \hat{b}, x_{\text{frac}}\ge 0,
    \]
    where $\hat{b}$ are the last $|x_{\text{frac}}|$ entries of $b-A_{\text{int}}x^\star_{\text{int}}$. Clearly, we have that $x^\star_{\text{frac}}$ is a feasible solution to the LP that achieves objective value $c_{\text{frac}}^\star= c^\star - c_{\text{int}}^\intercal x^\star_{\text{int}}$.

    It is well known that since $\hat{A}$ is totally unimodular, the LP admits an optimal solution where all variables are set to integer values~\cite{dantzig1956linear}. Let $x^{\star\star}_{\text{frac}}$ denote an optimal solution for the LP that sets all variables to integer values. Clearly, the achieved objective value of the solution $x^{\star\star}_{\text{frac}}$ is at least~$c^\star_{\text{frac}}$.

    Now, if we set the fractional variables of the MILP to $x^{\star\star}_{\text{frac}}$ (instead of $x^{\star}_{\text{frac}}$) and set the integer variables of the MILP to $x^{\star}_{\text{int}}$, we obtain a feasible solution to the MILP that achieves objective value at least $c^\star$. It is feasible since otherwise, a constraint in the LP must be violated. It has an objective value of at least $c^\star$ since the objective value achieved in the LP is at least $c^\star_{\text{frac}}=c^\star - c_{\text{int}}^\intercal x^\star_{\text{int}}$. The objective value achieved by the newly constructed solution to the MILP is also at most $c^\star$, since $c^\star$ is the objective value achieved by an optimal solution. We conclude that there exists an optimal solution to the MILP that sets all variables to integer values.
\end{proof}

\paragraph{Number of Distinct Bribe Values.}
Now, we are ready to state our results. Formally, we first prove the following.

\begin{theorem}\label{thm:numbervalues}
    \probname is fixed-parameter tractable when parameterized by the number of distinct bribe values.
\end{theorem}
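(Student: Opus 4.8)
The plan is to design a mixed integer linear program (MILP) for \probname\ whose number of integer variables is bounded by a function of the number~$k$ of distinct bribe values, and then invoke \cref{thm:MILP} to solve it in FPT-time. The key observation is that, since \probname\ is oblivious to the ordering and identities of the players (only the multiset of chosen price-probability pairs matters for the budget and the product), I do not need a separate variable per player. Instead, I would group players and bribe options by their bribe value. Let $b_1,\ldots,b_k$ be the distinct bribe values. A natural first formulation uses integer variables $y_{i,h}$ counting, for each player $e_i$, how many ``units'' of bribe value $b_h$ are assigned to that player---but this is still $\Theta(nk)$ integers. The trick to get down to $f(k)$ integers is to let the integer variables record only \emph{how many players are bribed at each of the $k$ bribe values} (an aggregate count, giving $k$ integer variables), and to handle the selection of \emph{which} player contributes \emph{which} probability by fractional/LP variables, relying on \cref{prop:MILP} to guarantee the fractional part rounds to an integral assignment.

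The main technical difficulty is the product objective $\prod_i p_{j_i}\ge t$: MILPs are linear, so I must linearize the product. I would take logarithms, turning the winning-probability constraint $\prod_i p_{j_i}\ge t$ into the linear constraint $\sum_i \ln p_{j_i}\ge \ln t$, and set $c$ to accumulate the quantities $\ln p_{j_i}$ so that maximizing $c^\intercal x$ corresponds to maximizing the log-probability. (Monotonicity of bribe vectors from \cref{lemma:values} ensures that within a fixed bribe value the relevant extremal choice is well-defined, which I would use to keep the formulation clean.) Concretely, for each player $e_i$ and each bribe value $b_h$ that $e_i$ accepts, I introduce an LP (fractional) selection variable $z_{i,h}\in[0,1]$ indicating whether $e_i$ is bribed at level $b_h$; the constraints are $\sum_h z_{i,h}=1$ for each player, $\sum_i b_h\, z_{i,h}$ contributes to the budget, and the log-probability is $\sum_{i,h} (\ln p_{i,h})\, z_{i,h}$. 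The aggregate integer variables $n_h=\sum_i z_{i,h}$ (number of players bribed at value $b_h$) are the only integer variables, so there are $k$ of them, plus one more coupling the budget.

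The crux is then verifying the hypothesis of \cref{prop:MILP}: I must show that the submatrix $A_{\text{frac}}$ of coefficients of the fractional variables $z_{i,h}$ is totally unimodular. The assignment-style constraints $\sum_h z_{i,h}=1$ together with the counting constraints $\sum_i z_{i,h}=n_h$ form an incidence structure of a bipartite graph (players on one side, bribe values on the other), whose constraint matrix is the node--edge incidence matrix of a bipartite graph and is therefore totally unimodular. I would isolate exactly these constraints into the fractional block, pushing the budget and log-probability \emph{targets} into the right-hand side or the objective so that they do not corrupt total unimodularity of $A_{\text{frac}}$. Once total unimodularity is established, \cref{prop:MILP} yields an optimal solution in which every $z_{i,h}$ is integral, i.e.\ a genuine assignment of exactly one bribe option to each player, and \cref{thm:MILP} solves the resulting MILP---with $O(k)$ integer variables---in FPT-time.

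The step I expect to be the main obstacle is arranging the formulation so that the fractional block is \emph{exactly} a bipartite incidence matrix while still correctly enforcing both the budget constraint $\sum_{i,h} b_h z_{i,h}\le B$ and the log-probability constraint as functions of the same $z_{i,h}$. Since these two constraints have arbitrary (non-$\{0,1\}$) coefficients, they cannot be part of $A_{\text{frac}}$ without destroying total unimodularity; the resolution is to make the counts $n_h$ integer variables that ``absorb'' the combinatorial choice, write the budget as $\sum_h b_h n_h\le B$ (a constraint purely among integer variables), and similarly express the attainable log-probability through the $n_h$ so that, for fixed integral $n_h$, the residual LP over the $z_{i,h}$ has only the totally unimodular incidence constraints. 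Making this decoupling precise---ensuring that for each fixed count vector $(n_1,\ldots,n_k)$ the best achievable log-probability is recovered by the LP relaxation over the incidence polytope---is the delicate part, and it is precisely where \cref{prop:MILP} does the heavy lifting.
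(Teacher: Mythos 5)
Your proposal is correct, and it rests on the same pillars as the paper's proof---linearizing the product via logarithms so the winning probability becomes a linear objective, isolating a totally unimodular constraint block for the fractional variables, invoking \cref{prop:MILP} to guarantee an all-integral optimum, and then solving via \cref{thm:MILP}---but your variable decomposition is genuinely different and in fact more economical. The paper groups players by their probability profile and their \emph{set} of bribe values: its integer variables $x_{v',V'}$ are indexed by pairs (bribe value, subset of bribe values), giving $2^{O(v_{\#})}$ integer variables, and its fractional variables are aggregated counts $x_{P,v',V'}$, which is why it needs \cref{lemma:values} to make the achieved probability $p(P,v',V')$ well defined for a grouped bribe. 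You instead keep one fractional variable $z_{i,h}$ per (player, accepted bribe value) pair and only $k=v_{\#}$ integer counters $n_h$, with the budget written purely over the $n_h$; the fractional block (assignment constraints $\sum_h z_{i,h}=1$ plus counting constraints $\sum_i z_{i,h}=n_h$ with the $n_h$ moved to the right-hand side) is the incidence matrix of a bipartite graph, hence totally unimodular---exactly the same two-ones-per-column structure the paper exploits---and the budget row contributes only zero entries to $A_{\text{frac}}$, so unimodularity is unharmed and the residual right-hand sides stay integral. Your version needs neither the exponential grouping nor \cref{lemma:values} (bribe values within one vector are distinct by definition, so $p_{i,h}$ is well defined), and reducing the number of integer variables from $2^{O(v_{\#})}$ to $v_{\#}$ improves the double-exponential running time that the paper explicitly flags as an open issue in its conclusion to a single-exponential-type dependence via Lenstra/Kannan. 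One phrasing in your final paragraph should be tightened: you do not need to ``express the attainable log-probability through the $n_h$''; it suffices to leave the log-probability as the objective over the $z_{i,h}$ (as you set it up earlier), since objective coefficients play no role in the total unimodularity of $A_{\text{frac}}$---this is precisely how \cref{prop:MILP} is applied in the paper as well, with the budget as an integer-only constraint and the log-probability as the objective whose optimum is compared against $\log t$.
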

To prove \cref{thm:numbervalues}, we provide a mixed integer linear program (MILP) formulation for \probname where the number of integer variables is upper-bounded by a function of the number of distinct bribe values in the \probname instance. 

\begin{proof}[Proof of \cref{thm:numbervalues}]
    We provide the MILP formulation for \probname where the number of integer variables is upper-bounded by a function of the number of distinct bribe values. Assume we are given an instance of \probname. We construct an MILP as follows.

    Let $v_{\#}$ denote the number of distinct bribe values and let $V$ denote the set of distinct bribe values.
    For each combination of a subset $V'\subseteq V$ and a value in that subset $v'\in V'$, we create an integer variable $x_{v',V'}$ that, intuitively, counts how many times we bribe a player with value $v'$ that has set of bribe values $V'$. 
    We call the set of probability values $P$ in the bribe vector of a player~$e_i$ the player's \emph{probability profile}.
    From \cref{lemma:values} follows that if two players $e_i, e_j$ have the same probability profile $P$ and have the same set of bribe values $V'$, we must have that $|P|=|V'|$ and hence bribing $e_i$ with some value $v'\in V'$ and bribing $e_j$ with the same value $v'$ increases the losing probability of $e_i$ and~$e_j$ to the same $p\in P$. We denote this probability with $p=p(P,v',V')$. In other words, players are uniquely characterized by their probability profile and their set of bribe values.

     For each combination of a subset $V'\subseteq V$, a value in that subset $v'\in V'$, and a probability profile $P$ (that appears in the \probname instance), we create a rational-valued variable~$x_{P,v',V'}$ that, intuitively, counts how many times a player that has set of bribe values $V'$ and probability profile $P$ is bribed with value $v'$ (to increase its losing probability to a uniquely determined $p=p(P,v',V')$).

     We want to maximize the following.
     \[
     \prod_{p} p^{\sum_{P,v',V'\mid p=p(P,v',V')}x_{P,v',V'}}
     \]
     
     This is equivalent to maximizing the logarithm of the expression. Hence, we have the following (linear) objective function.
     \[
     \sum_p \log p \left( \sum_{P,v',V' \mid p=p(P,v',V')}x_{P,v',V'} \right)
     \]

     We have the following constraints. The first one ensures that we do not violate the budget.
     \begin{equation}\label{constr:1}
         \sum_{v',V'} v' \cdot x_{v',V'}\le B
     \end{equation}
    The second set of constraints ensures that the number of times we use a value $v'$ to bribe a player that has the set of bribe values $V'$ (which is specified by $x_{v',V'}$) is the same as the sum of all times we use value $v'$ to bribe a player that has set of bribe values $V'$ and that has probability profile $P$.
    \begin{equation}\label{constr:2}
       \forall \ v',V' \colon \sum_{P} x_{P,v',V'} = x_{v',V'}
    \end{equation}
    The third set of constraints ensures that we do not use a value $v'$ to bribe a player that has the set of bribe values $V'$ and probability profile $P$ too many times. 
    Let $n_{P,V'}$ denote the number of players that have a set of bribe values $V'$ and the probability profile $P$. 
    \begin{equation}\label{constr:3}
        \forall \ P,V' \colon \sum_{v'}x_{P,v',V'} = n_{P,V'}
    \end{equation}
    Lastly, in the fourth set of constraints, we require that all fractional variables $x_{P,v',V'}$ are non-negative.
    \begin{equation}\label{constr:4}
        \forall \ P,v',V' \colon 0 \le x_{P,v',V'}
    \end{equation}

    It is easy to observe that the overall number of variables and constraints is in $2^{O(v_\#)}\cdot n$ whereas the number of integer variables is in $2^{O(v_\#)}$. By \cref{thm:MILP}, we can compute an optimal solution for the MILP in FPT-time with respect to the number $v_\#$ of distinct bribe values.

    In the remainder, we show that there is a solution to the MILP with
    \[
     \prod_{p} p^{\sum_{P,v',V'\mid p=p(P,v',V')}x_{P,v',V'}}\ge t
     \]
     if and only if the input instance of \probname is a yes-instance.

     $(\Rightarrow)$: Assume the input instance of \probname is a yes-instance. Then, it is possible to bribe players using budget $B$ such that the winning probability of $e^*$ is at least $t$. Let player $e_i$ be bribed with value $v_i$ in the solution and let the resulting losing probability of $e_i$ versus $e^*$ be $p_i$.

     We construct a solution for the constructed MILP as follows. Initially, we set all variables to~0. Now iterate over all players. Let player $e_i$ have a set of bribe values $V'$ and probability profile $P$. Then we increase the value of variable $x_{P,v_i,V'}$ by~1. Note that after his procedure, the constraints~(\ref{constr:3}) and~(\ref{constr:4}) are clearly met.

     Afterward, we set all integer values such that constraints~(\ref{constr:2}) are satisfied. Since these are equality constraints, this uniquely specifies how the integer variables are set. Furthermore, since we set all fractional variables to integer values in the previous step, we clearly also set all integer variables to integer values.

     Next, we argue that constraint~(\ref{constr:1}) is satisfied.
    To this end, note that every bribe with value $v'$ is accounted for exactly once by increasing the value of a variable $x_{P,v',V'}$ by~1. It follows that the same bribe is accounted for exactly once in the value of variable $x_{v',V'}$. Since the input instance is a yes-instance, the sum of all bribes is at most $B$, hence, we have that constraint~(\ref{constr:1}) is satisfied.

    Lastly, we argue that 
     \[
     \prod_{p} p^{\sum_{P,v',V'\mid p=p(P,v',V')}x_{P,v',V'}}\ge t.
     \]
    Similarly as in the argument before, note that every challenge that player $e^*$ can win with probability~$p$ is accounted for exactly once by increasing the value of a variable $x_{P,v',V'}$ such that $p=p(P,v',V')$ by~1. Hence, the number of challenges that $e^*$ can win with probability $p$ is $\sum_{P,v',V'\mid p=p(P,v',V')}x_{P,v',V'}$. Since the input instance is a yes-instance, player $e^*$ can win all challenges with a probability of at least $t$. It follows that the above inequality is fulfilled.
     
     $(\Leftarrow)$: Assume that we have a solution $x^\star$ to the created MILP instance such that 
     \[
     \prod_{p} p^{\sum_{P,v',V'\mid p=p(P,v',V')}x_{P,v',V'}}\ge t.
     \]
     In the following, we show how to bribe the players to increase the overall winning probability of $e^*$ to at least $t$.
     
     To this end, we show that there exists an optimal solution to the created MILP where \emph{all} variables are set to integer values. We do this using \cref{prop:MILP}. 

     Note that the constraint~(\ref{constr:1}) is independent of the fractional variables. Furthermore, constraints~(\ref{constr:3}) 
     are independent from the integer variables. We transform the constraints~(\ref{constr:2}) to constraints for the fractional variables by treating the integer variables as arbitrary constants. After that, we have a constraint matrix for the fractional variables consisting of the modified constraints~(\ref{constr:2}) and the constraints~(\ref{constr:3}). In the following, we will show that the corresponding constraint matrix is totally unimodular, which then by \cref{prop:MILP} implies that there exists an optimal solution to the MILP that sets all variables to integer values.

     First, note that the constraints~(\ref{constr:2}) partition the set of fractional variables, that is, each fractional variable is part of exactly one of the constraints~(\ref{constr:2}). We have the same for the constraints~(\ref{constr:3}). Furthermore, the coefficients in the constraint matrix for each variable are either 1 (if they are part of a constraint) or 0. It follows that the constraint matrix is a 0-1 matrix with exactly two 1's in every column. Additionally, in each column, we have that one of the two 1's appears in a row corresponding to the constraints~(\ref{constr:2}) and the other 1 is in a row corresponding to the constraints~(\ref{constr:3}). This is a sufficient condition for the constraint matrix to be totally unimodular~\cite{dantzig1956linear}.

     Thus, from now on, we can assume that the optimal solution $x^\star$ to the MILP sets all variables to integer values. We construct the bribes for the players as follows.
     Let $V'$ be a set of bribe values and~$P$ be a probability profile. Consider the set $E_{P,V'}$ of all players that have the set of bribe values~$V'$ and that have probability profile $P$. For each $v'\in V'$ we bribe $x_{P,v',V'}$ players of $E_{P,V'}$ with~$v'$. Note that constraints~(\ref{constr:3}) ensure that there are sufficiently many players in $E_{P,V'}$. Each bribe done this way is accounted for exactly once in the variable $x_{v',V'}$ due to the constraints~(\ref{constr:2}). Since constraint~(\ref{constr:1}) is satisfied, we have that the total amount of bribes does not exceed the budget~$B$. 
     
     It remains to show that the winning probability of $e^*$ after bribing the players is at least $t$. To this end, recall that a player with the set of bribe values $V'$ and probability profile $P$ loses with probability $p$ when playing against $e^*$ if and only if they are bribed with some $v'\in V'$ such that $p=p(P,v',V')$.
     It follows that the number of players that have losing probability $p$ when playing against $e^*$ is 
     \[
     \sum_{P,v',V'\mid p=p(P,v',V')}x_{P,v',V'}.
     \]
     Hence, we have that the overall winning probability of $e^*$ is 
    \[
     \prod_{p} p^{\sum_{P,v',V'\mid p=p(P,v',V')}x_{P,v',V'}},
     \]
     which by assumption is at least $t$.
\end{proof}

\paragraph{Number of Distinct Probability Values.}
With an analogous approach, we obtain the following result.

\begin{theorem}\label{thm:numberprobs}
    \probname is fixed-parameter tractable when parameterized by the number of distinct probability values.
\end{theorem}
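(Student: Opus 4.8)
The plan is to mirror the MILP-based proof of \cref{thm:numbervalues}, exchanging the roles of bribe values and probability values. Let $p_\#$ be the number of distinct probability values and let $P$ be the set of these values. By monotonicity (\cref{lemma:values}) a player is again characterized by the pair consisting of its \emph{probability profile} $P'\subseteq P$ (the set of probability values in its bribe vector) and its \emph{bribe profile} $V'$ (its set of bribe values); since $|P'|=|V'|$ and the $j$-th smallest probability value corresponds to the $j$-th smallest bribe value, fixing a target losing probability $p'\in P'$ for such a player determines a unique bribe value $v(V',p',P')$ that must be paid. I would introduce integer variables $y_{p',P'}$, one for each $p'\in P'\subseteq P$, counting how many players with probability profile $P'$ are bribed to losing probability $p'$; there are only $2^{O(p_\#)}$ of these. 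Fractional variables $y_{V',p',P'}$ refine these counts by bribe profile (there are $2^{O(p_\#)}\cdot n$ of them), subject to the analogues of constraints~(\ref{constr:2}) and~(\ref{constr:3}): a coupling constraint $\sum_{V'} y_{V',p',P'}=y_{p',P'}$ for every $(p',P')$, a capacity constraint $\sum_{p'} y_{V',p',P'}=n_{V',P'}$ for every type $(V',P')$ (where $n_{V',P'}$ is the number of players of that type), and non-negativity of the fractional variables.

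The main obstacle is that the symmetric swap does \emph{not} go through verbatim. In \cref{thm:numbervalues} the budget constraint~(\ref{constr:1}) has arbitrary (bribe-value) coefficients but lives entirely on the integer variables, so it contributes only a zero row to $A_{\text{frac}}$ and does not disturb total unimodularity. If I simply maximize the winning probability here, the budget constraint becomes $\sum_{V',p',P'} v(V',p',P')\,y_{V',p',P'}\le B$, which now sits on the \emph{fractional} variables with arbitrary coefficients; a single entry outside $\{-1,0,1\}$ destroys total unimodularity of $A_{\text{frac}}$, so \cref{prop:MILP} no longer applies. To circumvent this I would swap the objective and the constraint: instead of maximizing probability under a budget constraint, I minimize the total bribe cost $\sum_{V',p',P'} v(V',p',P')\,y_{V',p',P'}$ (equivalently, maximize its negation) subject to the probability threshold $\sum_{p'}\log(p')\sum_{P'\ni p'} y_{p',P'}\ge \log t$, answering ``yes'' if and only if the optimum is at most $B$. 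The crucial gain is that the arbitrary-coefficient linear form (the cost) is now the objective, where arbitrary coefficients are harmless, while the threshold constraint involves \emph{only} the integer variables $y_{p',P'}$ and hence again contributes a zero row to $A_{\text{frac}}$.

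With this reformulation, $A_{\text{frac}}$ consists of the coupling rows, the capacity rows, and zero rows. Exactly as in \cref{thm:numbervalues}, each fractional variable $y_{V',p',P'}$ appears in precisely one coupling constraint (indexed by $(p',P')$) and precisely one capacity constraint (indexed by $(V',P')$), so $A_{\text{frac}}$ is a $0$--$1$ matrix with exactly two $1$'s per column, one in each group; this is the standard sufficient condition for total unimodularity. By \cref{prop:MILP} the MILP therefore admits an optimal solution in which all variables are integral, and since the number of integer variables is $2^{O(p_\#)}$, \cref{thm:MILP} solves it in FPT-time with respect to $p_\#$. It then remains to verify the two directions of correctness as in \cref{thm:numbervalues}: an integral assignment of the $y$-variables of cost at most $B$ meeting the threshold is turned into a concrete bribe (the capacity constraints guaranteeing enough players of each type), and conversely any successful bribe within budget yields such an assignment.
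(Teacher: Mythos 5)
Your proposal is correct and follows essentially the same route as the paper: the same integer variables indexed by (probability, probability profile), the same fractional refinement by bribe profile, the same coupling and capacity constraints, the same swap of objective (bribe cost, minimized) and threshold constraint (on integer variables only), and the same total-unimodularity argument via \cref{prop:MILP}. The only difference is presentational: you explicitly explain \emph{why} the swap is forced (a budget constraint on the fractional variables would break total unimodularity), a point the paper leaves implicit when it states that the objective and the first constraint ``swap their roles.''
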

\cref{thm:numberprobs} can be achieved by a similar MILP as the one presented to prove \cref{thm:numbervalues}. There are some key differences while the main idea remains the same. We omit a formal proof but describe the MILP below. The correctness of the MILP can be proven in an analogous way as in the proof of \cref{thm:numbervalues}. 

For every probability profile $P$ that appears in the \probname instance and every probability $p\in P$, we create an integer variable $x_{p,P}$ that, intuitively, counts how many players with probability profile~$P$ are bribed to lose against $e^*$ with probability $p$. Additionally, we have rational variables~$x_{p,P,V'}$ for every probability profile $P$, every probability $p\in P$, and every set of bribe values $V'$. These variables, intuitively, count how many players with probability profile $P$ and set of bribe values $V'$ are bribed to lose against $e^*$ with probability $p$. Note that due to \cref{lemma:values}, the bribe value to achieve this is uniquely determined, hence, we denote it with $v'=v(p,P,V')$.

Now, we want to minimize\footnote{Note that we introduced MILP as a maximization problem. However, we can easily switch between minimization and maximization by multiplying every entry of the vector $c$ by $-1$.} the following objective function, which quantifies the budget needed for the bribes.
\[
\sum_{v'}v'\left(\sum_{p,P,V' \mid v'=v(p,P,V')}x_{p,P,V'}\right)
\]

The first constraint ensures that the winning strategy of $e^*$ is at least $t$.
\[
\prod_p p^{\sum_{P}x_{p,P}}\ge t
\]

To make this constraint linear, we take the logarithm on both sides of the inequality and obtain the following.
\[
\sum_p \log p \left( \sum_{P} x_{p,P}\right)\ge \log t 
\]

Note that compared to the MILP for \cref{thm:numbervalues}, the objective function and the first constraint swap their roles.

Now, we have three additional sets of constraints, which are analogous to the ones for the MILP for \cref{thm:numbervalues}.
\[
\forall \ p,P \colon \sum_{V'}x_{p,P,V'} = x_{p,P}
\]
\[
\forall \ P, V' \colon \sum_p x_{p,P,V'} = n_{P,V'}
\]
\[
\forall \ p,P,V' \colon 0\le x_{p,P,V'}
\]

Note that the total number of variables and constraints is in $2^{O(p_\#)}\cdot n$ whereas the number of integer variables is in $2^{O(p_\#)}$, where $p_\#$ denotes the number of distinct probability values. By \cref{thm:MILP}, we can compute an optimal solution for the MILP in FPT-time with respect to the number~$p_\#$ of distinct probability values.
The correctness proof is analogous to the one of \cref{thm:numbervalues}.

\section{Conclusion}
In our work, we investigated the parameterized complexity of \probname, a natural tournament bribery problem. We extended work by \citet{mattei2015complexity} and established three main results:
\begin{itemize}
    \item \probname is W[1]-hard when parameterized by the number of players.
    \item \probname in FPT when parameterized by the number of distinct bribe values.
    \item \probname in FPT when parameterized by the number of distinct probability values.
\end{itemize}
To obtain our results, we established W[1]-hardness of \PK when parameterized by the number of items in the knapsack and developed a new way of designing MILPs that have optimal solutions where \emph{all} variables are set to integer values. Furthermore, we showed that our W[1]-hardness for \probname implies W[1]-hardness for \probnamecup for the same parameterization.

There are several natural directions for future research. It remains open whether \probname is NP-hard when the probabilities are encoded in unary and bribe values are encoded in binary (this question has already been raised by \citet{mattei2015complexity}). In fact, it is open whether \PK is NP-hard when the item values are encoded in unary and item sizes are encoded in binary. 

Furthermore, note that our FPT-algorithms have double-exponential running times in the parameter. We leave the question of whether this can be improved open. Moreover, exploring whether our MILP formulations for \probname can be extended to \probnamecup would be interesting. The main difficulty is that we heavily exploit in our MILP formulations that the ordering of the matches (that is, the seeding) is irrelevant in \probname, which is not the case in \probnamecup.

\bibliographystyle{abbrvnat}
\bibliography{bibliography}

\begin{thebibliography}{40}
\providecommand{\natexlab}[1]{#1}
\providecommand{\url}[1]{\texttt{#1}}
\expandafter\ifx\csname urlstyle\endcsname\relax
  \providecommand{\doi}[1]{doi: #1}\else
  \providecommand{\doi}{doi: \begingroup \urlstyle{rm}\Url}\fi

\bibitem[Abboud et~al.(2014)Abboud, Lewi, and Williams]{abboud2014losing}
A.~Abboud, K.~Lewi, and R.~Williams.
\newblock Losing weight by gaining edges.
\newblock In \emph{Proceedings of the 22th Annual European Symposium on
  Algorithms (ESA)}, pages 1--12. Springer, 2014.

\bibitem[Aziz et~al.(2014)Aziz, Gaspers, Mackenzie, Mattei, Stursberg, and
  Walsh]{aziz2014fixing}
H.~Aziz, S.~Gaspers, S.~Mackenzie, N.~Mattei, P.~Stursberg, and T.~Walsh.
\newblock Fixing a balanced knockout tournament.
\newblock In \emph{Proceedings of the 2014 AAAI Conference on Artificial
  Intelligence (AAAI)}, volume~28, 2014.

\bibitem[Baumeister and Hogrebe(2021)]{baumeister2021complexity}
D.~Baumeister and T.~A. Hogrebe.
\newblock Complexity of scheduling and predicting round-robin tournaments.
\newblock In \emph{Proceedings of the 20th International Conference on
  Autonomous Agents and Multiagent Systems (AAMAS)}, pages 178--186, 2021.

\bibitem[Brandt and Fischer(2007)]{brandt2007pagerank}
F.~Brandt and F.~Fischer.
\newblock Pagerank as a weak tournament solution.
\newblock In \emph{Proceedings of the 3rd International Workshop on Web and
  Internet Economics (WINE)}, pages 300--305. Springer, 2007.

\bibitem[Bredereck et~al.(2016)Bredereck, Faliszewski, Niedermeier, and
  Talmon]{bredereck2016large}
R.~Bredereck, P.~Faliszewski, R.~Niedermeier, and N.~Talmon.
\newblock Large-scale election campaigns: Combinatorial shift bribery.
\newblock \emph{Journal of Artificial Intelligence Research}, 55:\penalty0
  603--652, 2016.

\bibitem[Chaudhary et~al.(2023)Chaudhary, Molter, and
  Zehavi]{chaudhary2023make}
J.~Chaudhary, H.~Molter, and M.~Zehavi.
\newblock How to make knockout tournaments more popular?
\newblock \emph{arXiv preprint arXiv:2309.09967}, 2023.

\bibitem[Cygan et~al.(2015)Cygan, Fomin, Kowalik, Lokshtanov, Marx, Pilipczuk,
  Pilipczuk, and Saurabh]{CyganFKLMPPS15}
M.~Cygan, F.~V. Fomin, L.~Kowalik, D.~Lokshtanov, D.~Marx, M.~Pilipczuk,
  M.~Pilipczuk, and S.~Saurabh.
\newblock \emph{Parameterized Algorithms}.
\newblock Springer, 2015.

\bibitem[Dadush et~al.(2011)Dadush, Peikert, and
  Vempala]{dadush2011enumerative}
D.~Dadush, C.~Peikert, and S.~Vempala.
\newblock Enumerative lattice algorithms in any norm via m-ellipsoid coverings.
\newblock In \emph{Proceedings of the 52nd IEEE Annual Symposium on Foundations
  of Computer Science (FOCS)}, pages 580--589. IEEE, 2011.

\bibitem[Dantzig(1956)]{dantzig1956linear}
G.~B. Dantzig.
\newblock \emph{Linear inequalities and related systems}.
\newblock Number~38. Princeton University Press, 1956.

\bibitem[Downey et~al.(2013)Downey, Fellows, et~al.]{downey2013fundamentals}
R.~G. Downey, M.~R. Fellows, et~al.
\newblock \emph{Fundamentals of parameterized complexity}, volume~4.
\newblock Springer, 2013.

\bibitem[Elkind and Faliszewski(2010)]{elkind2010approximation}
E.~Elkind and P.~Faliszewski.
\newblock Approximation algorithms for campaign management.
\newblock In \emph{Proceedings to the 6th International Workshop on Internet
  and Network Economics (WINE)}, pages 473--482. Springer, 2010.

\bibitem[Elkind et~al.(2009)Elkind, Faliszewski, and Slinko]{elkind2009swap}
E.~Elkind, P.~Faliszewski, and A.~Slinko.
\newblock Swap bribery.
\newblock In \emph{Proceedings of the 2nd International Symposium of
  Algorithmic Game Theory (SAGT)}, pages 299--310. Springer, 2009.

\bibitem[Faliszewski et~al.(2006)Faliszewski, Hemaspaandra, and
  Hemaspaandra]{faliszewski2006complexity}
P.~Faliszewski, E.~Hemaspaandra, and L.~A. Hemaspaandra.
\newblock The complexity of bribery in elections.
\newblock In \emph{Proceedings of the 2006 AAAI Conference on Artificial
  Intelligence (AAAI)}, volume~6, pages 641--646, 2006.

\bibitem[Faliszewski et~al.(2009)Faliszewski, Hemaspaandra, and
  Hemaspaandra]{faliszewski2009hard}
P.~Faliszewski, E.~Hemaspaandra, and L.~A. Hemaspaandra.
\newblock How hard is bribery in elections?
\newblock \emph{Journal of Artificial Intelligence Research}, 35:\penalty0
  485--532, 2009.

\bibitem[Gupta et~al.(2018)Gupta, Roy, Saurabh, and Zehavi]{gupta2018winning}
S.~Gupta, S.~Roy, S.~Saurabh, and M.~Zehavi.
\newblock Winning a tournament by any means necessary.
\newblock In \emph{Proceedings of the 27th International Joint Conference on
  Artificial Intelligence (IJCAI)}, pages 282--288, 2018.

\bibitem[Gupta et~al.(2019)Gupta, Saurabh, Sridharan, and
  Zehavi]{gupta2019succinct}
S.~Gupta, S.~Saurabh, R.~Sridharan, and M.~Zehavi.
\newblock On succinct encodings for the tournament fixing problem.
\newblock In \emph{IJCAI}, pages 322--328, 2019.

\bibitem[Halman et~al.(2019)Halman, Kovalyov, Quilliot, Shabtay, and
  Zofi]{halman2019bi}
N.~Halman, M.~Y. Kovalyov, A.~Quilliot, D.~Shabtay, and M.~Zofi.
\newblock Bi-criteria path problem with minimum length and maximum survival
  probability.
\newblock \emph{OR Spectrum}, 41:\penalty0 469--489, 2019.

\bibitem[Karia et~al.(2023)Karia, Mallick, and Dey]{karia2023hard}
N.~Karia, F.~Mallick, and P.~Dey.
\newblock How hard is safe bribery?
\newblock \emph{Theoretical Computer Science}, 979:\penalty0 114156, 2023.

\bibitem[Kim and Williams(2015)]{kim2015fixing}
M.~P. Kim and V.~V. Williams.
\newblock Fixing tournaments for kings, chokers, and more.
\newblock In \emph{Proceedings of the 24th International Joint Conference on
  Artificial Intelligence (IJCAI)}, pages 561--567, 2015.

\bibitem[Konicki(2019)]{konicki2019topics}
C.~Konicki.
\newblock \emph{Topics in fixing knockout tournaments: bribery, NP-hardness,
  and parameterization}.
\newblock PhD thesis, Massachusetts Institute of Technology, 2019.

\bibitem[Konicki and Williams(2019)]{konicki2019bribery}
C.~Konicki and V.~V. Williams.
\newblock Bribery in balanced knockout tournaments.
\newblock In \emph{Proceedings of the 18th International Conference on
  Autonomous Agents and Multiagent Systems (AAMAS)}, pages 2066--2068, 2019.

\bibitem[Krumer et~al.(2023)Krumer, Megidish, and Sela]{krumer2023strategic}
A.~Krumer, R.~Megidish, and A.~Sela.
\newblock Strategic manipulations in round-robin tournaments.
\newblock \emph{Mathematical Social Sciences}, 122:\penalty0 50--57, 2023.

\bibitem[Laslier(1997)]{laslier1997tournament}
J.~F. Laslier.
\newblock \emph{Tournament solutions and majority voting}, volume~7.
\newblock Springer, 1997.

\bibitem[Lenstra(1983)]{Lenstra1983Integer}
H.~W. Lenstra.
\newblock Integer programming with a fixed number of variables.
\newblock \emph{Mathematics of Operations Research}, 8:\penalty0 538--548,
  1983.

\bibitem[Mattei et~al.(2015)Mattei, Goldsmith, Klapper, and
  Mundhenk]{mattei2015complexity}
N.~Mattei, J.~Goldsmith, A.~Klapper, and M.~Mundhenk.
\newblock On the complexity of bribery and manipulation in tournaments with
  uncertain information.
\newblock \emph{Journal of Applied Logic}, 13\penalty0 (4):\penalty0 557--581,
  2015.

\bibitem[Niedermeier(2006)]{niedermeier2006invitation}
R.~Niedermeier.
\newblock \emph{Invitation to fixed-parameter algorithms}, volume~31.
\newblock OUP Oxford, 2006.

\bibitem[Pferschy et~al.(2021)Pferschy, Schauer, and
  Thielen]{pferschy2021approximating}
U.~Pferschy, J.~Schauer, and C.~Thielen.
\newblock Approximating the product knapsack problem.
\newblock \emph{Optimization Letters}, 15\penalty0 (8):\penalty0 2529--2540,
  2021.

\bibitem[Ramanujan and Szeider(2017)]{ramanujan2017rigging}
M.~Ramanujan and S.~Szeider.
\newblock Rigging nearly acyclic tournaments is fixed-parameter tractable.
\newblock In \emph{Proceedings of the 2017 AAAI Conference on Artificial
  Intelligence (AAAI)}, volume~31, 2017.

\bibitem[Rasmussen and Trick(2008)]{rasmussen2008round}
R.~V. Rasmussen and M.~A. Trick.
\newblock Round robin scheduling--a survey.
\newblock \emph{European Journal of Operational Research}, 188\penalty0
  (3):\penalty0 617--636, 2008.

\bibitem[Rosen(1985)]{rosen1985prizes}
S.~Rosen.
\newblock Prizes and incentives in elimination tournaments, 1985.

\bibitem[Roy(2020)]{roy2020select}
S.~Roy.
\newblock \emph{Select, Allocate, and Manipulate via Multivariate Analysis}.
\newblock PhD thesis, HOMI BHABHA NATIONAL INSTITUTE, 2020.

\bibitem[Russell and Walsh(2009)]{russell2009manipulating}
T.~Russell and T.~Walsh.
\newblock Manipulating tournaments in cup and round robin competitions.
\newblock In \emph{Proceedings of the 1st International Conference on
  Algorithmic Decision Theory (ADT)}, pages 26--37. Springer, 2009.

\bibitem[Saarinen et~al.(2015)Saarinen, Goldsmith, and
  Tovey]{saarinen2015probabilistic}
S.~Saarinen, J.~Goldsmith, and C.~A. Tovey.
\newblock Probabilistic copeland tournaments.
\newblock In \emph{Proceedings of the 14th International Conference on
  Autonomous Agents and Multiagent Systems (AAMAS)}, pages 1851--1852, 2015.

\bibitem[Stanton and Williams(2011)]{stanton2011manipulating}
I.~Stanton and V.~V. Williams.
\newblock Manipulating single-elimination tournaments in the braverman-mossel
  model.
\newblock In \emph{Proceedings of the IJCAI Workshop on Social Choice and
  Artificial Intelligence}, volume~87, 2011.

\bibitem[Suksompong(2021)]{suksompong2021tournaments}
W.~Suksompong.
\newblock Tournaments in computational social choice: Recent developments.
\newblock In \emph{Proceedings of the 30th International Joint Conference on
  Artificial Intelligence (IJCAI)}, pages 4611--4618, 2021.

\bibitem[Tao et~al.(2023)Tao, Chen, Xu, Xu, Gao, and Shi]{tao2023electoral}
L.~Tao, L.~Chen, L.~Xu, S.~Xu, Z.~Gao, and W.~Shi.
\newblock Electoral manipulation via influence: probabilistic model.
\newblock \emph{Autonomous Agents and Multi-Agent Systems}, 37\penalty0
  (1):\penalty0 18, 2023.

\bibitem[Vu et~al.(2008)Vu, Altman, and Shoham]{vu2008agenda}
T.~Vu, A.~Altman, and Y.~Shoham.
\newblock On the agenda control problem in knockout tournaments.
\newblock \emph{Proceedings of the 2nd International Workshop on Computational
  Social Choice (COMSOC)}, 2008.

\bibitem[Vu et~al.(2009)Vu, Altman, and Shoham]{vu2009complexity}
T.~Vu, A.~Altman, and Y.~Shoham.
\newblock On the complexity of schedule control problems for knockout
  tournaments.
\newblock In \emph{AAMAS (1)}, pages 225--232, 2009.

\bibitem[Williams and Moulin(2016)]{williams_moulin_2016}
V.~V. Williams and H.~Moulin.
\newblock \emph{Knockout Tournaments}, page 453–474.
\newblock Cambridge University Press, 2016.

\bibitem[Zehavi(2023)]{zehavi2023tournament}
M.~Zehavi.
\newblock Tournament fixing parameterized by feedback vertex set number is
  {FPT}.
\newblock In \emph{Proceedings of the 2023 AAAI Conference on Artificial
  Intelligence (AAAI)}, volume~37, pages 5876--5883, 2023.

\end{thebibliography}

\end{document}